\newtheorem{theorem}{Theorem}
\newtheorem{conjecture}[theorem]{Conjecture}
\newtheorem{lemma}[theorem]{Lemma}
\begin{document}

\preprint{APS/123-QED}

\title{Robust certification of non-projective measurements: theory and experiment}

\author{Raphael Brinster}
\thanks{These authors contributed equally}
\affiliation{Institut für Theoretische Physik III, Heinrich-Heine-Universität Düsseldorf, Universitätsstraße 1, 40225 Düsseldorf, Germany}

\author{Peter Tirler}
\thanks{These authors contributed equally}
\affiliation{Universität Innsbruck, Institut für Experimentalphysik, Technikerstraße 25, 6020 Innsbruck, Austria}

\author{Shishir Khandelwal}
\affiliation{Physics Department and NanoLund, Lund University, Box 118, 22100 Lund, Sweden.}

\author{Michael Meth}
\affiliation{Universität Innsbruck, Institut für Experimentalphysik, Technikerstraße 25, 6020 Innsbruck, Austria}

\author{Hermann Kampermann}
\affiliation{Institut für Theoretische Physik III, Heinrich-Heine-Universität Düsseldorf, Universitätsstraße 1, 40225 Düsseldorf, Germany}

\author{Dagmar Bruß}
\affiliation{Institut für Theoretische Physik III, Heinrich-Heine-Universität Düsseldorf, Universitätsstraße 1, 40225 Düsseldorf, Germany}

\author{Rainer Blatt}
\affiliation{Universität Innsbruck, Institut für Experimentalphysik, Technikerstraße 25, 6020 Innsbruck, Austria}
\affiliation{Institute for Quantum Optics and Quantum Information of the Austrian Academy of Sciences, Technikerstraße 21a, 6020 Innsbruck, Austria}

\author{Martin Ringbauer}
\affiliation{Universität Innsbruck, Institut für Experimentalphysik, Technikerstraße 25, 6020 Innsbruck, Austria}

\author{Armin Tavakoli}
\affiliation{Physics Department and NanoLund, Lund University, Box 118, 22100 Lund, Sweden.}

\author{Nikolai Wyderka}
\affiliation{Institut für Theoretische Physik III, Heinrich-Heine-Universität Düsseldorf, Universitätsstraße 1, 40225 Düsseldorf, Germany}

\begin{abstract}
Determining the conditions under which positive operator-valued measures (POVMs), the most general class of quantum measurements, outperform projective measurements remains a challenging and largely unresolved problem.
Of particular interest are projectively simulable POVMs, which can be realized through probabilistic mixtures of projective measurements, and therefore offer no advantage over projective schemes. Characterizing the boundary between simulable and non-simulable POVMs is, however, a difficult task, and existing tools either fail to scale efficiently, provide limited experimental feasibility or work only for specific POVMs. Here, we introduce and demonstrate a general method to certify non-simulability of a POVM by introducing a complete hierarchy of semidefinite programs. It provides upper bounds on the non-simulability measure of critical visibility of arbitrary POVMs which are tight in many cases and outperform previously known criteria.
We experimentally certify the non-simulability of two- and three-dimensional POVMs using a trapped-ion qudit quantum processor by constructing non-simulability witnesses and introduce a modification of our framework that makes them robust against state preparation errors. Finally, we extend our results to the setting where an additional ancilla system is available. 

\end{abstract}

\maketitle

Quantum measurements serve as an interface between the classical and the quantum world and lie at the heart of quantum technologies such as quantum key distribution \cite{Pirandola_2020}, quantum metrology \cite{Giovannetti_2011}, and quantum computing \cite{Briegel_2009}. Usually, one considers projective measurements, which are represented by projectors onto the elements of an orthonormal measurement basis. However, for some tasks it is advantageous to employ generalized measurements, so-called positive operator-valued measures (POVMs). Although these cannot be realized directly due to their non-projective nature, they can be implemented as projective measurements in a higher-dimensional Hilbert space. While this need for additional dimensions complicates their implementation in experiments, they are known to yield advantages in tasks such as state discrimination~\cite{Barnett_2009} and state tomography~\cite{Scott2006, stricker_experimental_2022}. Despite these challenges, numerous experiments across different physical platforms have successfully implemented non-projective POVMs \cite{Bian2015,shahandeh2017ultrafine,Hou2018,Wang2023,Feng2025}

Given their value in these tasks, much work has been devoted to quantifying the dimension of the ancillary Hilbert space required to implement certain POVMs \cite{Oszmaniec2017,Kotowski2025,gomez2016,Martinez2023}. It has been shown that some POVMs, even if they are not projective, can be implemented within the original space through classical mixing of randomly selected projective measurements~\cite{Oszmaniec2017}. POVMs that can be simulated in such a way are called projectively simulable. They form a proper subset of all POVMs and yield no advantage over projective measurements.  

Despite their intuitive definition, not much is known about the set of simulable POVMs, other than that sufficient white noise will eventually make every POVM simulable \cite{Oszmaniec2017}.
A measure of simulability is thus given by the \textit{critical visibility}~\cite{Oszmaniec2017}, which quantifies the white noise robustness of POVMs with respect to becoming simulable. A critical visibility of $1$ signifies that a POVM is simulable, whereas smaller values indicate larger distances to the set of simulable POVMs. Notwithstanding recent progress \cite{Oszmaniec_2019, Kotowski2025, Cobucci2025}, several open questions and key issues remain, impeding a thorough understanding of the geometry of the set of simulable POVMs.

First, while it is known that POVMs become simulable if enough noise in terms of randomly selected outcomes is added, it is generally a hard task to quantify the required amount of noise exactly, as no efficient characterization of the simulable set in high dimensions is known. This is reminiscent of a similar situation in the context of robustness measures of entanglement~\cite{Chitambar_2019}. In a similar spirit, it is therefore desirable to find efficient outer approximations of the simulable set that allow us to find upper bounds on the visibility of POVMs. 

Second, we lack tools to calculate explicit decompositions of simulable POVMs in terms of projective measurements. Such tools would be of immediate experimental interest, as they provide a recipe for how to implement these POVMs without ancilla systems. 

Finally, implementing non-simulable POVM requires performing projective measurements in a larger dimensional Hilbert space. Therefore, demonstrating non-simulability acts as a certificate of control over additional degrees of freedom.
Such claims, however, need to be thoroughly and robustly certified, which, until now, have only succeeded for few, specific  measurements~\cite{gomez2016, Smania2020, Martinez2023}.

Here, we tackle these issues in a systematic manner. To this end, we introduce a complete hierarchy of semidefinite programs that yields a sequence of efficiently computable outer approximations of the set of simulable POVMs, which can be used to calculate upper bounds on the critical visibility of arbitrary POVMs. We provide numerical evidence that the hierarchy collapses at finite levels. We observe that in many cases, this collapse can be used to calculate exact visibilities together with specific decompositions of simulable POVMs. Furthermore, we show that our hierarchy outperforms recent similar approaches in two ways. First of all, we construct a specific family of non-simulable POVMs that, while being detected by our hierarchy, are not detected to be non-simulable by the previous detection methods proposed in \cite{Oszmaniec2017,Cobucci2025}. Second, exploiting the duality of semidefinite programs, we construct witnesses for non-simulability. Such witnesses can be measured experimentally by measuring the target POVM on a carefully chosen set of probe states. While such witnesses can be extracted from any SDP-based certification method, they usually suffer from being very susceptible to errors in state preparation. We overcome this limitation in our approach by modifying the hierarchy to significantly lower the demands on the fidelity of the prepared states for successful certification.

Next, we show the experimental viability of our methods by implementing a qubit symmetric informationally complete (SIC) POVM and a qutrit real space informationally complete POVM on a trapped-ion qudit quantum processor~\cite{ringbauer_universal_2022} and use the constructed witnesses to certify their non-simulability.

Finally, we extend our results to the setting where an additional ancilla of a fixed dimension is available. It turns out, that adding small ancillary systems can increase the projective simulability drastically \cite{Singal2022,Kotowski2025}. To complement the existing lower bounds in these scenarios, we develop tools to calculate corresponding upper bounds on simulability thresholds in the presence of ancillary systems.

The paper is organized as follows. After introducing projective simulability in Sec.~\ref{sec:SP}, we construct the complete hierarchy of semidefinite programs and provide evidence for its finite level collapse in Sec.~\ref{sec:SDP}. We apply it to a variety of POVMs to calculate upper bounds on the critical visibility and show that our hierarchy outperforms previous criteria. In Sec.~\ref{sec:SDPdual}, we construct non-simulability witnesses and show how to make them robust against state preparation errors in Sec.~\ref{sec:prep_errors}. In Sec.~\ref{sec:experiment}, we experimentally demonstrate the relevance of our methods on a trapped-ion quantum processor. Finally, in Sec.~\ref{sec:heralded}, we develop methods for bounding projective simulability thresholds in the presence of ancilla systems.

\section{Projectively simulable POVMs}\label{sec:SP}

A $d$-dimensional POVM with $n$ effects is a set of positive semidefinite matrices that sum to the identity, i.e.
\begin{align}
    \mathcal{M} = \{M_0,\ldots,M_{n-1}\}, \quad M_i\geq 0, \quad \sum_{i=0}^{n-1} M_i = \mathbb{I}_d.
\end{align}
We denote the set of $d$-dimensional $n$-effect POVMs by $\mathbb{M}(d,n)$. A POVM is called projective, if its elements are orthogonal projectors, i.e.
\begin{align}
    \mathbf{P} = \{P_0,\ldots,P_{n-1}\}, \quad P_i P_j = \delta_{ij}P_i, \quad \sum_{i=0}^{n-1} P_i = \mathbb{I}_d.
\end{align}

In Ref.~\cite{Oszmaniec2017}, the notion of a projectively simulable POVM was introduced and defined as a classical probabilistic mixture of projective measurements followed by post-processing, which is then proven to be irrelevant \cite{Oszmaniec2017, Cobucci2025}.
Mathematically, the set of all simulable POVMs, denoted by $\mathcal{P}(d,n)$, is therefore given by the convex hull of the set of projective measurements. The effects of a projectively simulable POVM can therefore be written as
\begin{align}
    M_i = \sum_{k=1}^N p_k P^k_i, 
\end{align}
where $\{ P_0^k, \dots, P_{n-1}^k \}$ are projective measurements for any $k$ and $\{ p_1,\dots,p_N \}$ forms a probability distribution.

The simplest quantitative measure describing the non-projectivity of a POVM $\mathcal{M}$ is the critical visibility $t(\mathcal{M})$. Adding enough white noise to each effect $M_i$ renders the noisy POVM defined via
\begin{align}
     \Phi_t(M_i) = t M_i + (1-t)\frac{\Tr M_i}{d} \mathbb{I}, \; 0\leq t \leq 1,
\end{align}
simulable. The critical visibility (sometimes called the simulability threshold) is then defined by
\begin{align}
    t(\mathcal{M}) := \max\{ t \;  | \;  \Phi_t(\mathcal{M}) \in \mathcal{P}(d,n), t \leq 1 \}.
\end{align}
For a noise parameter of $t = \max(\tfrac{1}{d}, 0.02)$, it is known that any $d$-dimensional POVM becomes simulable. Equivalently, one has $t(\mathcal{M}) \geq \max(\tfrac{1}{d}, 0.02)$ for every $d$-dimensional POVM $\mathcal{M}$~\cite{Oszmaniec2017, Kotowski2025}. At least for small dimensions this bound seems to be not tight, since for $d=2$ actually $t=\sqrt{\tfrac{2}{3}} \approx 0.8165$ is enough to simulate all POVMs by projective measurements \cite{Hirsch2017betterlocalhidden}. In Ref.~\cite{Oszmaniec2017}, efficient criteria based on semidefinite programs (SDPs) were introduced that can calculate the critical visibility simulability for POVMs in dimensions $d=2$ and $d=3$. They were later refined in Ref.~\cite{Cobucci2025}, but in most cases, they only yield upper bounds in $d\geq 4$.

In the the next section, we present a method to (in principle) calculate the critical visibility for given POVMs in any dimension.

\section{SDP Hierarchy} \label{sec:SDP}
We formulate the problem of certifying the non-simulability of a POVM in terms of a hierarchy of semidefinite programs~\cite{vandenberghe1996semidefinite}. We point out that our method is reminiscent of the symmetric extension for separable states, and the resulting Doherty-Parrilo-Spedalieri (DPS) hierarchy, see \cite{doherty2004complete}.
We start by introducing the main idea of the first non-trivial level of the hierarchy. To that end, consider a $d$-dimensional $n$-outcome simulable POVM $\mathcal{M} = \{ M_0, \dots, M_{n-1} \}$, i.e.,  we can decompose its elements as
\begin{align}\label{eq:sim_povm_decomposition}
    M_i = \sum_{k=1}^N p_k P^k_i, 
\end{align}
where the $\mathbf{P}^{k} = \{ P_0^k, \dots, P_{n-1}^k \}$ form projective measurements for any $k$. 
\begin{figure}
    \centering
    \includegraphics[width=1.0\linewidth]{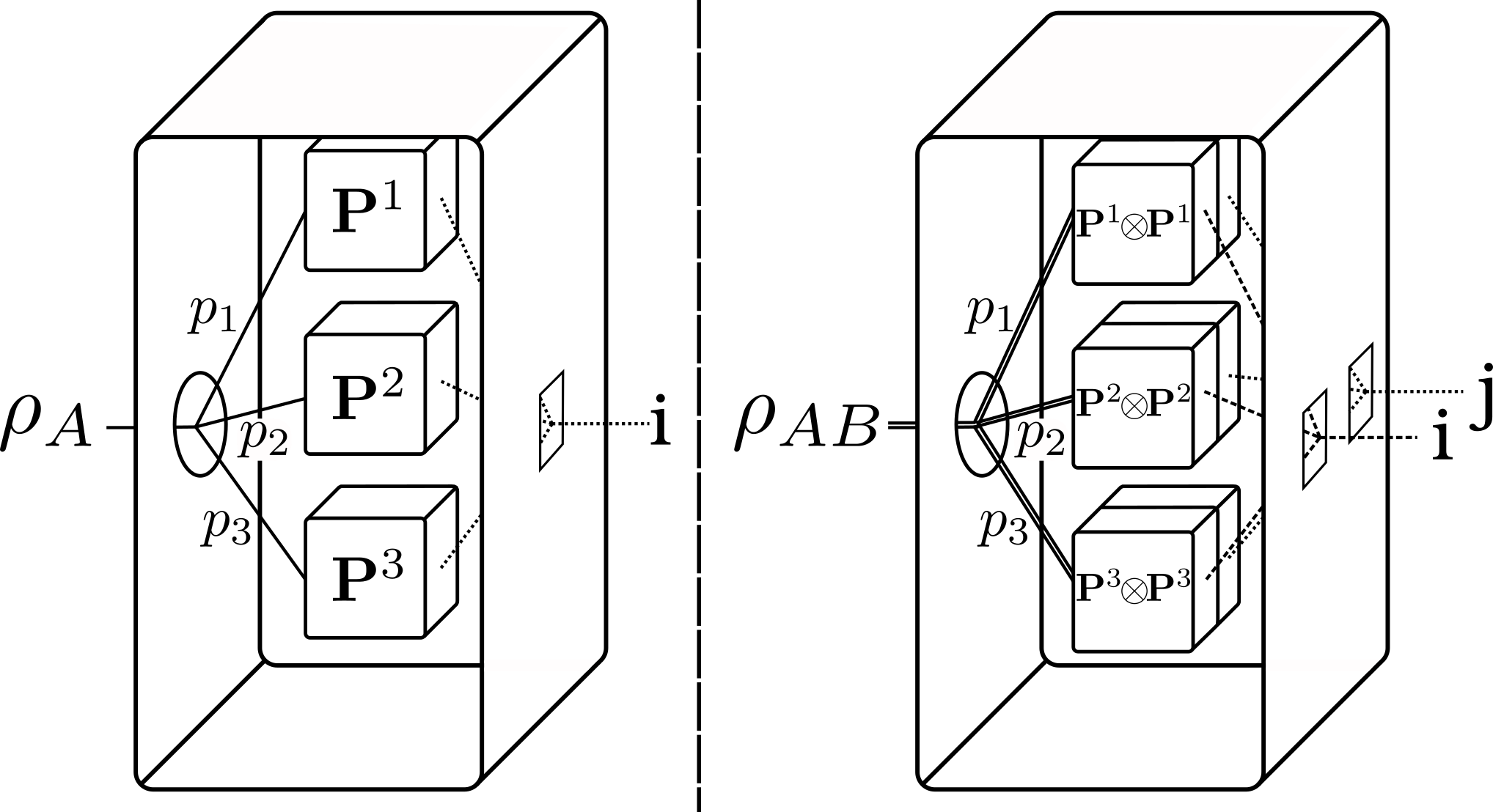}
    \caption{Left: a POVM is called simulable if it can be implemented by performing with probability $p_k$ the projective measurement $\mathbf{P}^k$. 
    Right: If a POVM on a single system is simulable, one can simulate as well a bipartite POVM by performing with probability $p_k$ the local projective measurement $\mathbf{P}^k \otimes \mathbf{P}^k$, which, upon marginalization, yields the measurement statistics of the single-body marginals.}
    \label{fig:simulation}
\end{figure}

If we can simulate $\mathcal{M}$ using the projective measurements $\mathbf{P}^k$, we can likewise simulate a bipartite POVM by implementing with probability $p_k$ the local projective measurement $\mathbf{P}^k \otimes \mathbf{P}^k = \{P_i^k \otimes P_j^k\}_{i,j=0}^{n-1}$ with the property that summing over the outcomes of the first (or the second) party yields the measurement statistics of the POVM $\mathcal{M}$ of the second (or the first) subsystem. The procedure is visualized in Fig.~\ref{fig:simulation}. Given the decomposition in Eq.~\eqref{eq:sim_povm_decomposition}, we can construct the  corresponding POVM elements via
\begin{equation}\label{eq:R_ij}
    R_{ij} = \sum_{k=1}^N p_k (P_i^k \otimes P_j^k),
\end{equation}
which obey
\begin{align}
    R_{ij} &\geq 0, \label{eq:Rij_positive} \\
    \sum_{i=0}^{n-1} R_{ij} &= \sum_k p_k \left( \Big( \sum_i P_i^k\Big) \otimes P_j^k \right) = \mathbb{I} \otimes M_j \label{eq:sum_i_R_ij}, \\
    \sum_{j=0}^{n-1} R_{ij} &= \sum_k p_k \left(  P_i^k \otimes \Big( \sum_j P_j^k\Big) \right) =  M_i \otimes \mathbb{I} \label{eq:sum_j_R_ij}, \\
    \Tr_{1} (V R_{ij}) &= \sum_k p_k P_i^k P_j^k = 0 \quad \quad \forall \; i \neq j .\label{eq:swap_constraint} 
\end{align}
Here, $\Tr_1$ denotes the partial trace w.r.t.~the first subsystem.
Eqs.~\eqref{eq:sum_i_R_ij} and \eqref{eq:sum_j_R_ij} just follow from the fact that the $P_i^k$ form a measurement and therefore $\sum_{i} P_{i}^k = \mathbb{I} \; \forall \; k $. $V$ is the two-qudit swap operator defined as
\begin{equation}\label{eq:swap}
    V = \sum_{i,j=0}^{d-1} \ket{ij}\bra{ji}
\end{equation}
and fulfills the well-known relation
\begin{align}
    \Tr_{1}(V(A\otimes B)) = AB.
\end{align}
Eq.~\eqref{eq:swap_constraint} then follows from the fact that elements of a projective measurement are orthogonal, i.e. $P_i^k P_j^k = 0 \; \forall \; k\; \text{and}\; i\neq j$. Note that together with the identity $\mathbb{I}$, the swap operator \eqref{eq:swap} forms a unitary $d^2$-dimensional representation of the symmetric group $S_2$ (\text{i.e.} all permutations acting on 2 symbols).

Constraints \eqref{eq:Rij_positive}-\eqref{eq:swap_constraint} can easily be generalized if we allow for larger $m$-partite matrices
\begin{align}\label{eq:R_i1im}
    R_{i_1\dots i_m} = \sum_{k=1}^{N}p_k ( P_{i_1}^k \otimes \dots \otimes P_{i_m}^k).
\end{align}
In order to generalize the swap constraint \eqref{eq:swap_constraint}, note that we can apply any swap operator, i.e.~all unitaries $V_\mu$ that act on $m$-partite qudit states via permutations (furnishing a $d^m$-dimensional unitary representation of the symmetric group $S_m$)
\begin{align}
    V_{\pi} = \sum_{i_1,\dots,i_m=0}^{d-1} \ketbra{\pi(i_1\dots i_m)}{i_1\dots i_m} \; \text{for} \;  \pi \in  S_m, 
\end{align}
where $\pi(i_1\dots i_m) = (i_{\pi^{-1}(1)}\ldots i_{\pi^{-1}(m)})$.
We define $\mathcal{V}_m$ to be the subset of all those unitaries that represent transpositions, meaning that exactly two indices are swapped. For example, in $m=3$ there are $3$ different swap operators corresponding to single transpositions 
\begin{align}
    \mathcal{V}_3 = \{V_{(12)},V_{(13)},V_{(23)}\},
\end{align}
where the subscript denotes the swapped systems. One possible swap constraint would then read as
\begin{align}
    \Tr_1 (V_{(12)}R_{i_1i_2i_3}) = \sum_k p_k (P_{i_1}^k \cdot P_{i_2}^k) \otimes P_{i_3}^k = 0 \; \forall \;  i_1 \neq i_2 .
\end{align}
In Appendix~\ref{app:primal_dual_sdp}, we demonstrate that it suffices to consider single transpositions only instead of all $m!$ permutations.

We can now define the following hierarchy of SDPs $H_m(\mathcal{M})$ with input POVM $\mathcal{M}$
\begin{widetext}
\begin{align}
  H_m(\mathcal{M}) = &\max_{R_{i_1\dots i_m}} \quad t & \label{sdp_multi_copy_primal_t(d)} \\
  \textrm{s.t. } & \quad \quad \quad \quad R_{i_1\dots i_m} \geq 0,& \label{sdp_multi_1} \\
  &\sum_{\substack{i_1,\ldots,i_{j-1},\\i_{j+1},\ldots,i_m=0}}^{n-1} \!\!\!\!\!R_{i_1\dots i_m} = \mathbb{I} \otimes \dots \otimes  \Phi_t(M_{i_j}) \otimes \dots \otimes \mathbb{I},  \quad \forall\, j=1\ldots m,\, i_j = 0\ldots n-1, \label{sdp_multi_sum}  \\
  &  \Tr_{a} (V_{(ab)} R_{i_1\dots i_m}) = 0, \quad \forall \; V_{(ab)} \in \mathcal{V}_m    \quad \text{if} \; V_{(ab)}\ket{i_1,\dots,i_m} \neq \ket{i_1,\dots,i_m} .  \label{sdp_multi_2}
\end{align}
\end{widetext}
Here, $\Tr_{a}$ means that we trace out system $a$, and we ensure that $i_a$ has to differ from $i_b$. This ensures that the swap constraint reduces to a product of distinct projectors and therefore vanishes. The SDP \eqref{sdp_multi_copy_primal_t(d)} takes a $d$-dimensional, $n$-outcome POVM $\mathcal{M}$ as input and calculates the minimal amount of noise (maximal $t$) one has to add, such that the necessary constraints for simulability (Eqs.~\eqref{sdp_multi_1}-\eqref{sdp_multi_2}) are fulfilled. Therefore, this yields an upper bound on $t(\mathcal{M})$. 

In general, higher levels of the hierarchy yield better upper bounds on $t(\mathcal{M})$, which is encapsulated in the following theorem.

\begin{theorem}[Higher level of the hierarchy leads to a better bound] \label{thm:theorem_1}
Let $t(\mathcal{M})$ be the critical visibility of a POVM $\mathcal{M}$ and $H_m(\mathcal{M})$ be the solution to the $m$'th level of the SDP hierarchy \eqref{sdp_multi_copy_primal_t(d)}. Then
\begin{align}\label{eq:theorem_1}
    H_2(\mathcal{M}) \geq H_3(\mathcal{M}) \geq \dots \geq t(\mathcal{M}).
\end{align}
\end{theorem}
\begin{proof}
Since $H_q(\mathcal{M})$ only consists of necessary conditions on projective simulability, and we are maximizing $t$, we trivially have
\begin{align}
    H_q(\mathcal{M}) \geq t(\mathcal{M}) \, \forall \; q \geq 2.
\end{align}
Now we need to show that from a solution $R_{i_1\dots i_{q+1}}$ to the level $m=q+1$, we can always construct a feasible point for the $q$-th level. For this, just define
\begin{align}\label{eq:Rq_from_q+1}
    R_{i_1\dots i_q} := \frac{1}{d} \sum_{i_{q+1}} \Tr_{q+1}R_{i_1\dots i_{q+1}}.
\end{align}
This obviously fulfills constraints \eqref{sdp_multi_1} and \eqref{sdp_multi_sum}. The swap constraint \eqref{sdp_multi_2} is fulfilled as well, because for a swap operator $V'_{(ab)} = V_{(ab)}'' \otimes \mathbb{I}^{q+1}$, that acts on system $q+1$ trivially, we have
\begin{align}
    \Tr_a \left( V_{(ab)}'' \otimes \mathbb{I}^{q+1} R_{i_1\dots i_{q+1}} \right) = 0\\
   \Rightarrow \Tr_{(ab)} \left( V_{(ab)}'' R_{i_1\dots i_q} \right) = 0 
\end{align}
Therefore, the choice from Eq.~\eqref{eq:Rq_from_q+1} fulfills all constraints in $H_q(\mathcal{M})$, and since we are maximizing the visibility $t$, we have
\begin{align}
    H_q(\mathcal{M}) \geq H_{q+1}(\mathcal{M}).
\end{align}
This proves the claim.
\end{proof}

In fact, one can strengthen Theorem~\ref{thm:theorem_1} and show that a slightly modified hierarchy is even complete, i.e., every non-simulable POVM is detected for large enough $m$:
\begin{theorem}[Completeness] \label{thm:completeness}
    Adding the constraints 
    \begin{align}
        V_{(ab)}R_{i_1\ldots i_a \ldots i_b \ldots i_m}V_{(ab)} &= R_{i_1\ldots i_b \ldots i_a \ldots i_m}, \label{eq:constraint_sym}\\
        \sum_{i_m=0}^{n-1} R_{i_1\ldots i_{m-1}i_m} &= \tilde{R}_{i_1\ldots i_{m-1}} \otimes \mathbb{I} \label{eq:constraint_sum}
    \end{align}
    for all $V_{(ab)} \in \mathcal{V}_m$, $i_j=0\ldots n-1$ and arbitrary positive semidefinite $\tilde{R}$ to the hierarchy $H_m$ in Eq.~\eqref{sdp_multi_copy_primal_t(d)} makes it complete.
\end{theorem}
The proof can be found in Appendix~\ref{app:proof_completeness}.
In practice, we found no difference in the obtained values of the original hierarchy and the complete one.

In the following, we benchmark our hierarchy on the following collection of POVMs. 
\begin{enumerate}
    \item \textit{SIC-POVMs}. 
A SIC-POVM (symmetric informationally complete POVM) in dimension $d$ consists of $d^2$ effects fulfilling $\Tr M_i M_j = (\delta_{ij}d + 1)/(d^2(d+1))$ and is known to be an optimal measurement for state tomography~\cite{Scott2006, stricker_experimental_2022}. While no general construction of a $d$-dimensional SIC-POVM is known, almost all known SIC-POVMs are constructed from a carefully chosen fiducial vector $\ket{\psi_0}$ such that $M_0 = \frac1d \ketbra{\psi_0}{\psi_0}$. The $d^2$ effects  are then constructed via
\begin{align}
    M_i = h_i M_0 h_i^\dagger.
\end{align}
Here $h_i$ are Weyl-Heisenberg operators~\cite{bengtsson2017}, i.e., $i=(i_1,i_2)$ is a multi-index and 
\begin{align}
    h_i = D_{i_1,i_2} = \tau^{i_1i_2} X^{i_1} Z^{i_2}, 
\end{align}
where $\tau = -e^{-\frac{i\pi}{d}}$,  $X=\sum_{j}^{d-1} \ketbra{j\oplus 1}{j}$ is the shift and $Z = \sum_{j=0}^{d-1} e^{2\pi i j /d} $ is the clock operator. 

The easiest case is that of $d=2$, where each SIC-POVM is unitarily equivalent to $\mathcal{M}_2 = \{M_0,M_1,M_2,M_3\}$, where $M_i = 1/2\; \ketbra{\Psi_i}{\Psi_i}$ with 
\begin{align}\label{eq:psi_sic2}
    \ket{\Psi_0} &= \ket{0},\nonumber\\
    \ket{\Psi_1} &= \frac{1}{\sqrt{3}}\ket{0} + \sqrt{\frac23}\ket{1},\nonumber\\
    \ket{\Psi_2} &= \frac{1}{\sqrt{3}}\ket{0} + e^{2\pi i /3}\sqrt{\frac23}\ket{1},\nonumber\\
    \ket{\Psi_3} &= \frac{1}{\sqrt{3}}\ket{0} + e^{4\pi i /3}\sqrt{\frac23}\ket{1}.
\end{align}

SIC-POVMs were conjectured to be most robust to projective simulability in Ref.~\cite{Oszmaniec2017}, which recently was proven to be false beyond dimension two~\cite{Cobucci2025}. In fact, in $d=3$, all SIC-POVMs are (anti-)unitarily equivalent to members of a single-parameter family of SIC-POVMs with different critical visibilities, constructed from the fiducial vector~\cite{Szollosi2014, Hughston2016, Zhu2010}
\begin{align}\label{eq:sic3fid}
    \ket{\psi_0(\varphi)} = \frac{1}{\sqrt{2}}(\ket{1} - e^{i\varphi}\ket{2}),
\end{align}
of which the most robust one is the so-called Hesse SIC with $\varphi = 0$ and a critical visibility of $[1+4\cos(\pi/9)]/6$ (the Hesse SIC is labeled SIC3c in Ref.~\cite{Scott_2010}. There, the cases of SIC3a and SIC3b correspond to choosing $\varphi = \pi/18$ and $\varphi = \pi/9$, respectively). SIC-POVMs constructed from $\varphi \notin [0,\pi/9]$ can be (anti-)unitarily mapped to a representative with $\varphi$ in that range.
In general, SIC-POVMs are known for finitely many dimensions, including every dimension up to $d=193$ \cite{Horodecki2022}, and they are conjectured to exist in all dimensions \cite{Zauner1999}.

\item \textit{Real space IC-POVM in $d=3$}.
If one restricts to the real subspace in $d=3$, one can define a real space informationally complete POVM instead, consisting of $\binom{d+1}{2}$ effects. Here, we choose $\mathcal{M}_{3r} = \{M_0,\ldots,M_5\}$ with the six effects $M_i = 1/2\;\ketbra{\Phi_i}{\Phi_i}$ with 
\begin{align}\label{eq:phi_sicr3}
    \ket{\Phi_{0,1}} &= \frac1{\sqrt{2}}(\ket{0} \pm \ket{1}),\nonumber\\
    \ket{\Phi_{2,3}} &= \frac1{\sqrt{2}}(\ket{0} \pm \ket{2}),\nonumber\\
    \ket{\Phi_{4,5}} &= \frac1{\sqrt{2}}(\ket{1} \pm \ket{2}).
\end{align}

\item \textit{General POVM in $d=3$}.
One can generalize the fiducial vector in Eq.~\eqref{eq:sic3fid} to an arbitrary vector
\begin{align}\label{eq:fid_vec_two_parameters}\ket{\phi_0(\vartheta, \varphi)} = \cos\frac{\vartheta}{2} \ket{1} + \sin \frac{\vartheta}{2} e^{i\phi}\ket{2}.
\end{align}
In this way, one can represent each member of this family by a point on a sphere with SIC-POVMs located on the equator at $\vartheta = \pi/2$ (note that $\varphi = \phi + \pi$). At the poles, e.g., where $\vartheta = 0$, the POVM is trivially simulable, implying that their critical visibility is equal to $1$.

\item \textit{Flagged SIC-POVMs}.
In Ref.~\cite{Cobucci2025}, it was found that there is a POVM in $d=4$ whose non-simulability is more robust to noise than that of any SIC-POVM. It is constructed by embedding the effects of the three-dimensional Hesse SIC into a four-dimensional space and adding the projector $\ketbra{3}{3}$ as an additional effect to obtain a complete POVM. The same principle can be used to create a flagged version of any POVM, and we additionally consider a three-dimensional, flagged SIC2 with five effects.

\item \textit{Two-copy SIC-POVM.}
Finally, we consider the four-dimensional POVM that is constructed by tensoring the four effects of the qubit SIC-POVM with the identity, i.e., we set $M_i = 1/2\; \ketbra{\Psi_i}{\Psi_i} \otimes \mathbb{I}$, where the $\ket{\Psi_i}$ are given in Eq.~\eqref{eq:psi_sic2}.
The SDP criterion from Ref.~\cite{Cobucci2025} does not detect the non-simulability of this POVM, however, as detailed below, our hierarchy does.

\end{enumerate}

\begin{table}
    \begin{tabular}{ l | c | c | c | c | r }
      $d$ & POVM $\mathcal{M}$ & $H_{2}(\mathcal{M})$ & $H_{3}(\mathcal{M})$ & $H_{4}(\mathcal{M})$ & $t(\mathcal{M})$ \\
      \hline
      2&$\text{SIC2}^\ast$  &0.8165 &0.8165 & 0.8165 & $\sqrt{2/3}$   \\ 
      \hline
      3&$\text{flag SIC2}^\ast$ &0.8193 & 0.7985 & 0.7985 &  0.7985 \\
      3&$\text{real IC3}^\ast$ & 0.8529 & 0.8521 & 0.8521 & 0.8521 \\
      3&$\text{SIC3a}^\ast$  &0.8334 &0.8004 &0.8004 & 0.8004  \\
      3&$\text{SIC3b}^\ast$  &0.8334 &0.8058 &0.8058 & 0.8058 \\
      3&$\text{SIC3c}^\ast$  &0.8334 &0.7932 &0.7932 & 0.7932\\
      \hline
      4&$\text{flag SIC3c}^\dagger$ & 0.8348 & 0.8002 & ? & 0.7824 \\
      4& $ \text{SIC2} \otimes \mathbb{I}_2^\sharp$ & 0.9443 & 0.8665 & 0.8661 & ?  \\
      4&$\text{SIC4a}^\dagger$  &0.8453 &0.8347 & ? & 0.8255 \\
      \hline
      5&SIC5a  &0.8544 &0.8461 &? &?  \\
      \hline
      6&SIC6a  &0.8617 &?      &? &? \\
      \hline
      7&SIC7a  &0.8676 &?      &? &? \\
      7&SIC7b  &0.8676 &?      &? &?  
    \end{tabular}
    \caption{SDP results for critical visibilities $t(\mathcal{M})$ in various dimensions using different levels $m$ of the hierarchy $H_m$ in Eq.~\eqref{sdp_multi_copy_primal_t(d)}. The labels for the SIC-POVMs correspond to those in Ref.~\cite{Scott_2010}. The $\ast$ denotes POVMs, where we get $t(\mathcal{M})$ by reading off a projective decomposition from the solution of $H_d(\mathcal{M})$ (see Appendix~\ref{app:decomp_primal}). The $\dagger$ denotes POVMs, where $t(\mathcal{M})$ was calculated in \cite{Cobucci2025}. $\sharp$ denotes the POVM, where the SDP from \cite{Cobucci2025} does not detect non-simulability, but $H_m(\mathcal{M})$ does.  }\label{table_t(d)}
\end{table}

We test our hierarchy on the introduced POVMs to benchmark our method and list the results in Table~\ref{table_t(d)}. Notably, our hierarchy certifies the non-simulability of the two-copy SIC-POVM $\mathcal{M}_2 \otimes \mathbb{I}$, giving the upper bound of $t(\mathcal{M}_2 \otimes \mathbb{I}) \leq 0.8661$, whereas the criterion of Ref.~\cite{Cobucci2025} and \cite{Oszmaniec2017} fails to detect its non-simulability completely. In fact, we find a whole family of POVMs, for which this is the case.
\begin{lemma}\label{lemma}
    Let $M = \{M_i\}_{i=0}^{n-1}$ be a POVM in dimension $d$ with rank-1 effects, $Tr(M_i) = \frac{d}{n}$ and $M_iM_j \neq 0$. Then define the $d\cdot m$-dimensional POVM $\tilde{M}$ with $\tilde{M}_i = M_i \otimes \mathbb{I}_m $, where $m$ is chosen, such that $m\frac{d}{n} \in \mathbb{N}$. Then $\tilde{M}$ is
    \begin{itemize}
        \item[] i) detected to be non-simulable by the first non-trivial level of our hierarchy, i.e., $H_2(\tilde{M}) < 1$,
        \item[] ii) not detected by the SDP criteria in \cite{Oszmaniec2017} and \cite{Cobucci2025}.
    \end{itemize}
\end{lemma}
The proof can be found in Appendix \ref{app:proof_lemma}. This shows that our hierarchy is stronger than previous criteria not only in the limit of large levels, but already at its first non-trivial level.

For the two-parameter family of POVMs constructed from the vectors in Eq.~\eqref{eq:fid_vec_two_parameters}, we show the results in Fig.~\ref{fig:lemonsqueezer}, where each point on the outer sphere is matched by a point on the inner spheroid-like convex set, which is constructed such that its distance to the sphere corresponds to the critical visibility.

We note that in many cases we studied, the solution $R_{i_1\dots i_m}$ directly reveals the decomposition of the noisy POVM $\Phi_t(\mathcal{M})$  into projective measurements \eqref{eq:sim_povm_decomposition}. As a result, the computed critical visibility $t(\mathcal{M})$ is tight. For the cases we studied, this seems to happen at the $d$-th order of the SDP hierarchy for a $d$-dimensional POVM. It is not obvious that this should be possible, since from \eqref{eq:R_i1im} it seems that all $R_{i_1,\dots,i_m}$ are non-trivial sums of scaled tensor products of projectors and it seems difficult to extract a single projector $P_{i_1}^k$ or the right probability $p_k$, see Appendix~\ref{app:decomp_primal} for details. The fact that we get explicit decompositions of SIC-POVMs as well as random rank-1 POVMs in $d=2,3$ leads us to the following conjecture.

\begin{conjecture} For a $d$-dimensional POVM $\mathcal{M}$ the hierarchy $H_m(\mathcal{M})$ defined in \eqref{sdp_multi_copy_primal_t(d)} collapses at the $d$'th level, at which point the solution equals the critical visibility 
\begin{align}
    t(\mathcal{M}) = H_d(\mathcal{M}).
\end{align}
\end{conjecture}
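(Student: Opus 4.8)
The plan is to observe first that the entire conjecture reduces to the single inequality $H_d(\mathcal{M}) \leq t(\mathcal{M})$. Indeed, the ordering $H_2(\mathcal{M}) \geq H_3(\mathcal{M}) \geq \dots \geq t(\mathcal{M})$ proved above already gives $H_d(\mathcal{M}) \geq t(\mathcal{M})$, so establishing the reverse inequality forces $H_d(\mathcal{M}) = t(\mathcal{M})$; monotonicity then squeezes $t(\mathcal{M}) \leq H_m(\mathcal{M}) \leq H_d(\mathcal{M}) = t(\mathcal{M})$ for every $m \geq d$, which yields both the collapse and the claimed value at once. The remaining task is therefore to show that any feasible point of the $d$-th level with objective value $t^\ast = H_d(\mathcal{M})$ can be converted into an explicit projective decomposition of the noisy POVM $\Phi_{t^\ast}(\mathcal{M})$, certifying $\Phi_{t^\ast}(\mathcal{M}) \in \mathcal{P}(d,n)$ and hence $t^\ast \leq t(\mathcal{M})$.

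To build this decomposition I would treat the family $\{R_{i_1\dots i_d}\}$ as a \emph{truncated operator-valued moment sequence}: the intended form $R_{i_1\dots i_m} = \sum_k p_k\, P_{i_1}^k \otimes \dots \otimes P_{i_m}^k$ is precisely the degree-$m$ moment data of a measure $\sum_k p_k \delta_{\mathbf{P}^k}$ supported on projective measurements, so feasibility at level $d$ plays the role of a truncated moment problem whose representing measure should live on the extreme points of $\mathcal{P}(d,n)$. The positivity \eqref{sdp_multi_1}, the marginal constraints \eqref{sdp_multi_sum}, and the transposition constraints \eqref{sdp_multi_2} are exactly the necessary conditions on such moments, and the crux is to show they become \emph{sufficient} at $m=d$. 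The natural machinery is a flat-extension argument in the spirit of the Curto--Fialkow theorem for the truncated moment problem (and of the de Finetti / Doherty--Parrilo--Spedalieri symmetric-extension philosophy): if the relevant moment matrix does not gain rank when passing from level $d-1$ to level $d$, the truncated sequence extends to a full one with a genuine representing measure, from which the pairs $(p_k,\mathbf{P}^k)$ can be recovered by a flat-extension / Gelfand--Naimark-type reconstruction.

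The reason the decisive level should be exactly $d$ is the finite dimensionality of the Hilbert space, which I would make quantitative through the rigidity of the top antisymmetric sector. A complete rank-one projective measurement in dimension $d$ is a choice of orthonormal basis, i.e.\ exactly $d$ mutually orthogonal rank-one projectors, and the fully antisymmetric subspace $\bigwedge^d \mathbb{C}^d$ is one-dimensional. A short computation shows that, for any orthonormal basis and any $d$ distinct outcomes, the compression $\Pi_{\mathrm{asym}}\,(P_{i_1}^k \otimes \dots \otimes P_{i_d}^k)\,\Pi_{\mathrm{asym}}$ equals $\tfrac{1}{d!}\Pi_{\mathrm{asym}}$ whenever all projectors are nonzero and vanishes otherwise, where $\Pi_{\mathrm{asym}}$ is the basis-\emph{independent} rank-one projector onto $\bigwedge^d \mathbb{C}^d$. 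Hence the antisymmetric sector of the all-distinct blocks of $R$ is basis-independent and records only aggregate weights of the constituent measurements; I would use this as the rigidity that forces the moment rank to stabilize at $m=d$, since the $d$-fold joint data already pin the measurement bases and no genuinely new information can appear at level $d+1$. Combined with the extraction step, this would deliver $\Phi_{t^\ast}(\mathcal{M}) = \sum_k p_k \mathbf{P}^k$ and close the loop.

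The hard part is precisely the point the authors flag as open: the SDP imposes only \emph{necessary} conditions, so a priori a feasible $R$ need not have the separable tensor form at all, and proving that positivity, marginals, and single transpositions become jointly sufficient at a finite level is a genuine exactness statement for a non-commutative moment hierarchy, where such guarantees are scarce and delicate. Two auxiliary difficulties feed into this. First, the constraints enforce only single-transposition symmetry rather than full permutation invariance of $R$; I would begin by symmetrizing a feasible point, showing that replacing $R$ with its $S_d$-average preserves feasibility and the objective, so that the de Finetti-type symmetry may be assumed without loss of generality. Second, the antisymmetric-sector rigidity is cleanest for rank-one measurements, and extending it to higher-rank projectors---or arguing that an optimal decomposition can always be taken rank-one---will likely require extra work. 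If the sharp statement at level exactly $d$ resists a fully general proof, a reasonable fallback is to establish the weaker, still-useful asymptotic convergence as $m \to \infty$ via a de Finetti bound, and to prove the level $d$ only under a rank-one or genericity hypothesis.
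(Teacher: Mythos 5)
You should first be aware that the paper does not prove this statement: it is stated as a \emph{conjecture}, and the authors explicitly write that they are unable to prove it, offering only numerical evidence (the hierarchy collapses at level $d$ for SIC-POVMs and random rank-one POVMs in $d=2,3$, and in those cases the solution matrices $R_{i_1\dots i_d}$ happen to have product form on all-distinct index tuples, so an explicit projective decomposition can be read off; see Appendix~\ref{app:decomp_primal}). Your overall frame is therefore the right one and essentially mirrors theirs: by Theorem~1 only the inequality $H_d(\mathcal{M}) \leq t(\mathcal{M})$ is missing, and it would follow if every feasible point at level $d$ (or at least an optimal one) could be converted into a genuine decomposition $\Phi_{t^\ast}(M_i) = \sum_k p_k P_i^k$. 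Your reduction and the extraction idea are correct as far as they go, and your antisymmetric-sector computation is fine for rank-one measurements with exactly $d$ nonzero effects.

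The genuine gap is that the decisive step --- showing that positivity, the marginal constraints, and the single-transposition constraints become \emph{sufficient} at level $m=d$ --- is not carried out, and the machinery you invoke does not apply off the shelf. Curto--Fialkow-type flat-extension theorems require a flatness (rank-stabilization) hypothesis that nothing in the SDP guarantees; you would have to \emph{prove} that an optimal $R$ at level $d$ is flat, which is essentially the conjecture restated. The non-commutative analogues (NPA-style hierarchies, symmetric-extension/de Finetti arguments) give exactness only asymptotically or under additional rank conditions, and the paper's constraints are weaker than full separability: a feasible $R_{i_1\dots i_d}$ need not even be separable across the tensor factors, let alone supported on projector tuples. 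Your antisymmetric-sector rigidity records only aggregate weights and does not pin down the bases of the constituent measurements (two different mixtures of bases can have identical compressions onto $\bigwedge^d \mathbb{C}^d$), so the claim that ``no genuinely new information can appear at level $d+1$'' is an assertion, not an argument; it also says nothing about decompositions involving higher-rank projectors, which do occur (the paper's own $d=2$ example at $t=\sqrt{2/3}$ uses von Neumann measurements, but nothing forces optimal decompositions in general dimension to be rank-one). Your own closing paragraph concedes all of this, so what you have is a reasonable research program --- symmetrize, seek a flatness certificate, handle higher ranks --- rather than a proof; the conjecture remains open both in the paper and in your proposal.
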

Unfortunately, we are not able to prove this conjecture or get more evidence in higher dimensions, since $H_d(\mathcal{M})$ scales exponentially with $d$. However, we find that it holds true, for instance, for random rank-1 POVMs in $d=2,3$ and for the whole two-parameter family of POVMs constructed from the fiducial vector \eqref{eq:fid_vec_two_parameters}, implying that the representation of the simulable set in Fig.~\ref{fig:lemonsqueezer} is tight, yielding a faithful representation of the set of simulable POVMs in this family.

\begin{figure}
    \centering
    \includegraphics[width=0.99\columnwidth]{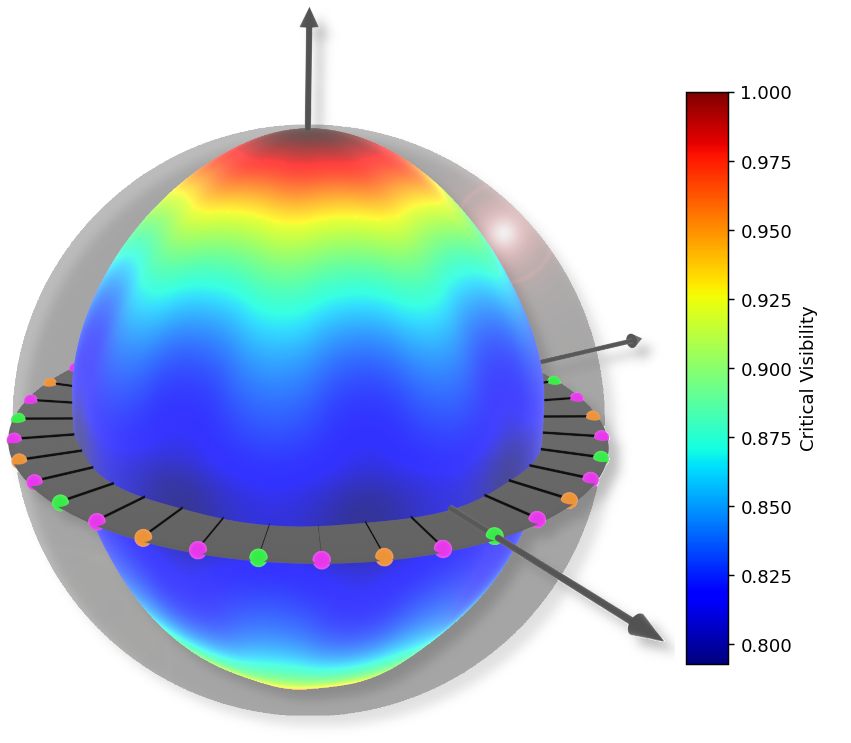}
    \caption{Faithful representation of the set of POVMs constructed from the fiducial vector in Eq.~\eqref{eq:fid_vec_two_parameters}, where $\vartheta$ denotes the polar and $\phi$ the azimuthal angle. Each point on the transparent unit sphere corresponds to the fiducial vector of one of the POVMs. The SIC-POVMs, constructed from Eq.~\eqref{eq:sic3fid}), and their (anti-)unitary equivalents are located on the equator. The origin corresponds to the maximally noisy measurement, such that noisy versions $\Phi_t(M)$ of POVM $M$ on the surface of the sphere lie on a straight line to the origin. The surface of the inner convex set corresponds then to the noisy POVMs at their critical visibilities, which in this picture corresponds to the distance to the origin. This distance is highlighted by the color coding.
    On the equator, we highlight some special choices of equivalence classes of SIC-POVMs which correspond to those having lowest (Hesse SIC, orange, corresponding to the choice of $\phi = \pi$, i.e., $\varphi = 0$ in Eq.~\eqref{eq:sic3fid}), highest (green) and intermediate (purple) critical visibility. 
    \label{fig:lemonsqueezer}}
\end{figure}

We conclude this section by highlighting another application of our hierarchy in the context of state discrimination. To that end, we ask for how much arbitrary noise in form of another POVM can a measurement tolerate before eventually becoming simulable. To that end, we consider the noise model   
\begin{align}
    \tilde{\Phi}_{t,\mathcal{N}}(M_i) = tM_i + (1-t) N_i 
\end{align}
for some arbitrary POVM $\mathcal{N} = \{N_0,\ldots,N_{n-1}\}$. This allows to modify the hierarchy $H_m$ in Eq.~\eqref{sdp_multi_copy_primal_t(d)} by additionally maximizing over $\mathcal{N}$ and replacing $\Phi_t$ by $\tilde{\Phi}_{t,\mathcal{N}}$ in Eq.~\eqref{sdp_multi_sum} (technically, we optimize over subnormalized POVMs that sum to $(1-t)\mathbb{I}$). In this way, the hierarchy yields values $\tilde{H}_m(\mathcal{M})$ which act as lower bounds on the so-called generalized robustness $R$ via $R(\mathcal{M}) \geq 1/\tilde{H}_m(\mathcal{M}) - 1$
\cite{Chitambar_2019, steiner2003generalized}. Due to completeness of our hierarchy, the bound becomes an equality in the limit of large $m$. The relevance of this bound lies in the fact that the generalized robustness precisely quantifies the achievable advantage of using non-simulable measurements over simulable ones in a state discrimination task  \cite{oszmaniec2019operational}.

\section{Dual SDP and non-simulability witnesses}\label{sec:SDPdual}

By studying the dual SDP we can construct non-simulability witnesses. The dual problem of $H_2(\mathcal{M})$ is calculated in Appendix~\ref{app:primal_dual_sdp} and takes the following form ($V$ is again the swap operator from equation \eqref{eq:swap})
\begin{align}\label{sdp_dual_t(d)}
\mathcal{D}_2(\mathcal{M})=& \min_{A^i,D^{ij}} \quad  \sum_{i=0}^{n-1} \frac{2}{d} \left[\Tr(A^i) \Tr(M_i)  \right] \\ 
\textrm{s.t.} \quad & A^i + V A^j V +  (\mathbb{I}\otimes D^{ij} V + \text{h.c.}) \geq 0 \; \forall \; i,j  \;, \label{sdp_dual_pos} \\
&\text{with} \; D^{ij}=(D^{ji})^\dagger, D^{ii} = 0 , \nonumber  \\
& \sum_{i=0}^{n-1} 2 \left[\Tr(A^i) \frac{\Tr(M_i)}{d} - \Tr(\Tr_2 (A^i) M_i)\right]  = 1. \label{sdp_dual_normalisation}
\end{align}
The $A^i$ are hermitian $d^2 \times d^2$ and $D^{ij}$ complex $d \times d$ matrices. Calculating the critical visibility directly via the dual side also yields advantages, as symmetries can be used to reduce the number of optimization variables and constraints, as detailed in Appendix~\ref{app:primal_dual_sdp}.

Similar to the case of entanglement witnesses, we can construct non-simulability witnesses from the dual SDP.

\begin{theorem}\label{thm:witnesspositivity}
Let $A^0, \ldots, A^{n-1}$ be operators satisfying equation~\eqref{sdp_dual_pos} and define $W_i = \Tr_2 A^i$. Then for any simulable POVM $(M_0,\ldots,M_{n-1}) \in \mathcal{P}(d,n)$
\begin{align}\label{eq:witness}
    \expval{M_i} := \sum_{i=0}^{n-1} \Tr(W_i M_i) \geq 0.
\end{align}
\end{theorem}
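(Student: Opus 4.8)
The plan is to prove the witness inequality by a weak-duality-style pairing between the dual operators $A^i, D^{ij}$ and a feasible point of the primal problem that every simulable POVM naturally supplies. Concretely, for a simulable $(M_0,\dots,M_{n-1})$ with decomposition $M_i = \sum_k p_k P_i^k$ as in Eq.~\eqref{eq:sim_povm_decomposition}, I would form the bipartite operators $R_{ij} = \sum_k p_k (P_i^k \otimes P_j^k)$ of Eq.~\eqref{eq:R_ij}. These satisfy $R_{ij}\geq 0$, the marginal relations \eqref{eq:sum_i_R_ij}--\eqref{eq:sum_j_R_ij}, and, crucially, the orthogonality $P_i^k P_j^k = \delta_{ij} P_i^k$ inherited from projectivity. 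The central object is then the quantity
\begin{align}
S := \sum_{i,j=0}^{n-1} \Tr\!\left[\Big(A^i + V A^j V + (\mathbb{I}\otimes D^{ij})V + \text{h.c.}\Big)\, R_{ij}\right],
\end{align}
which is $\geq 0$ termwise, since each bracket is positive semidefinite by \eqref{sdp_dual_pos} and each $R_{ij}\geq 0$. The goal is to show $S = 2\expval{M_i}$, from which the claim follows immediately.

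The next step is to evaluate the four contributions to $S$ separately. For the $A^i$ term I would use the marginal $\sum_j R_{ij} = M_i\otimes\mathbb{I}$ together with the partial-trace identity $\Tr[A^i (M_i\otimes\mathbb{I})] = \Tr[\Tr_2(A^i)\, M_i] = \Tr[W_i M_i]$, giving $\sum_{i,j}\Tr[A^i R_{ij}] = \expval{M_i}$. For the $V A^j V$ term I would use $V R_{ij} V = R_{ji}$ (the swap exchanges the two tensor factors) and then relabel $i\leftrightarrow j$, obtaining the same value $\expval{M_i}$. Together these account for the $2\expval{M_i}$.

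The crux is showing that the two $D$-dependent cross terms vanish. Here I would push the swap through $R_{ij}$ using $V(P_i^k\otimes P_j^k) = (P_j^k\otimes P_i^k)V$ and apply the identity $\Tr[(X\otimes Y)V] = \Tr[XY]$, which reduces $\Tr[(\mathbb{I}\otimes D^{ij})V\,(P_i^k\otimes P_j^k)]$ to $\Tr[P_i^k P_j^k D^{ij}]$. For $i\neq j$ this is zero by orthogonality of the projectors---precisely the structural content encoded in the swap constraint \eqref{eq:swap_constraint}---while for $i=j$ it is killed by $D^{ii}=0$. The hermitian-conjugate term vanishes by the same argument together with $(D^{ij})^\dagger = D^{ji}$.

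I expect this last step to be the only delicate point, because it is where the genuinely \emph{projective} (rather than merely POVM) structure enters the argument; some care is needed to track which system the swap acts on and to confirm that the dual constraints $D^{ii}=0$ and $D^{ij}=(D^{ji})^\dagger$ are exactly what makes both the $i=j$ and $i\neq j$ cases collapse. Once the cross terms are gone, combining the contributions gives $S = 2\expval{M_i}\geq 0$, hence $\expval{M_i}\geq 0$ for every simulable POVM, as claimed.
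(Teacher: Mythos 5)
Your proposal is correct and follows essentially the same route as the paper: the paper's proof invokes weak duality between the simplified primal SDP $H^\ast_2(\mathcal{M})$ (the feasibility program with $t=1$, whose feasible points are exactly the matrices $R_{ij}$ you construct from the simulable decomposition) and the dual constraint \eqref{sdp_dual_pos}. Your pairing $S=\sum_{i,j}\Tr\left[\left(A^i+VA^jV+(\mathbb{I}\otimes D^{ij})V+\mathrm{h.c.}\right)R_{ij}\right]\geq 0$, with the two $A$-terms each yielding $\expval{M_i}$ and the cross terms killed by projector orthogonality and $D^{ii}=0$, is precisely that weak-duality argument written out explicitly, with the minor bonus of being self-contained rather than relying on the displayed dual program being the correct Lagrangian dual.
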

This can be shown by applying the weak duality of semidefinite programming, for a more detailed proof, see Appendix \ref{app:proof_theorem}.

Conversely, if we find a POVM where Eq.~\eqref{eq:witness} becomes negative, we know that the solution is not in agreement with $t\geq1$ and the POVM is therefore not projectively simulable. Eq.~\eqref{eq:witness} hence describes a non-simulability witness (see Fig. \ref{fig:witness}).

In principle, the witness can be evaluated directly using measurement statistics, based on the spectral decomposition of the witness operators
\begin{align}\label{eq:witness_spectral}
    W_i = \sum_{k=0}^{d-1} \lambda_i^k \ketbra{\lambda_i^k}{\lambda_i^k} .
\end{align}
Eq.~\eqref{eq:witness} then reads
\begin{align}\label{eq:witness_qik}
    \sum_{i=0}^{n-1} \left[\Tr(W_i M_i) \right] = \sum_{i=0}^{n-1} \sum_{k=0}^{d^2-1} \lambda_i^k \underbrace{\Tr ( \ketbra{\lambda_i^k}{\lambda_i^k}M_i)}_{q^k_{i}} ,
\end{align}
where $q_i^k$ denotes the probability of obtaining outcome $i$ upon performing the POVM measurement $\{M_0,\ldots,M_{n-1}\}$ on the input state $\ket{\lambda_i^k}$. A negative witness expectation value then certifies projective non-simulability of the performed measurement.

\section{Preparation Errors}\label{sec:prep_errors}
A major experimental challenge in evaluating Eq.~\eqref{eq:witness_qik} experimentally occurs if the states $\ket{\lambda_i^k}$ are not prepared perfectly. Similarly to the case of entanglement witnesses, where the same effect occurs if the \textit{measurement} is not implemented perfectly, this quickly leads to false detections of non-simulability \cite{morelli2022,tavakoli2021semi}. 

\begin{figure}
\centering
    \includegraphics[width=0.99\columnwidth]{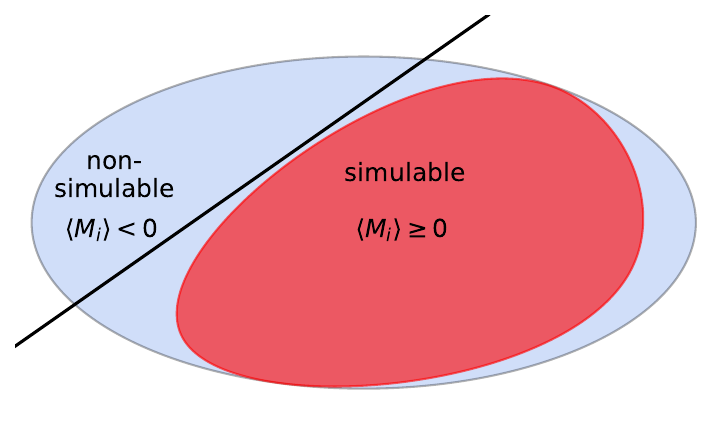}
    \caption{Schematic representation of the set of projectively simulable POVMs (red region) as a convex subset of the set of all POVMs. A non-simulability witness (solid line) separates simulable POVMs from non-simulable POVMs (blue region). For the definition of 
    $\expval{M_i}$ see Eq.~\eqref{eq:witness}. }
    \label{fig:witness}
\end{figure}

To quantify this, assume that we prepare states  $\rho_i^k$ that are close to $\ketbra{\lambda_i^k}{\lambda_i^k}$ in terms of the fidelity, such that the experimentally observed quantities are instead given by $p_i^k$ with
\begin{align}\label{eq:exp_prob_fid}
    p_i^k = \Tr(\rho_i^k M_i), \quad F(\rho_i^k,\ketbra{\lambda_i^k}{\lambda_i^k}) \equiv \bra{\lambda_i^k} \rho_i^k \ket{\lambda_i^k} \geq F_i^k.
\end{align}
Here, the $F_i^k$ are lower bounds on the fidelity of the prepared states w.r.t.~the target states.

In order to evaluate \eqref{eq:witness_qik}, one needs bounds on the $q_i^k$ depending on the measured probabilities $p_i^k$. Although this is possible using wasteful fidelity estimates (see, e.g., Ref.~\cite{morelli2022}), such estimates lead to very high demands on the fidelity on the order of 99.9\% in order to observe a negative expectation value of the witness. 

To circumvent this, we develop a different approach, yielding more relaxed bounds on the required fidelity and therefore making the experimental certification much more robust.  The main idea is the following: we measure the POVM $(M_0,\ldots,M_{n-1})$ on approximate witness states $\rho_\alpha$ assuming $ F(\rho_\alpha,\ketbra{\lambda_\alpha}{\lambda_\alpha}) \geq F_\alpha$, and then check if the measured probabilities on the ensemble of states are compatible with a projectively simulable POVM using a suitable linear semidefinite approximation of a joint space of measurements and prepared states, thereby eliminating the need for overly tight fidelity bounds. We extend our SDP hierarchy to include experimental data and expand the $R$-matrices by an additional system to encapsulate the (unknown) experimentally prepared states. An infeasible outcome then implies non-simulability of the performed POVM. For this, we define
\begin{align}
    R_{ij\alpha} = \sum_\lambda p_\lambda  P_{i}^\lambda \otimes P_{j}^\lambda \otimes \rho_\alpha.
\end{align}
These $R_{ij\alpha}$ are optimization variables in the SDP, which fulfill similar constraints to those in \eqref{sdp_multi_copy_primal_t(d)} and furthermore can be related to the measured probabilities and fidelities in Eq.~\eqref{eq:exp_prob_fid}. For example,
\begin{align}
   \Tr_1 \sum_{j} R_{ij\alpha} &= \Tr (M_i) \mathbb{I} \otimes \rho_\alpha \geq \lambda_i^{\text{max}} \mathbb{I} \otimes \rho_\alpha \nonumber \\
  &\geq p^{\text{max}}_{i} \, \mathbb{I} \otimes \rho_\alpha \, \forall \alpha, i,
\end{align}
where $ \lambda_i^{\text{max}}$ is the maximum eigenvalue of $M_i$ and $p^{\text{max}}_{i}$ the maximal probability measured for the outcome $i$. Additionally, the $\rho_\alpha$ are optimization variables constrained by the fidelity estimates via
\begin{align}\label{eq:fidelity_constraint}
    \Tr (\rho_\alpha\ketbra{\lambda_\alpha}{\lambda_\alpha}) \geq F_\alpha.
\end{align}
This method also takes statistical errors $\delta$ of the probabilities $p_{\alpha,i}$ into account, from which we can deduce a confidence interval via Hoeffding's bound \cite{Hoeffding01031963}
\begin{align}\label{eq:hoeffding}
    \mathbb{P}\left( |p_{\alpha,i} - \mathbb{E}[\overline{p_{\alpha,i}}]| \geq \delta \right) \leq 2 \exp\left( -2 N \delta^2 \right),
\end{align}
where $ \mathbb{E}[\overline{p_{\alpha,i}}]$ is the expected value of $p_{\alpha,i}$ and $N$ the number of shots in the experiment. The complete extended SDP is listed in Eq.~\eqref{sdp_cert} in Appendix~\ref{app:certify_exp}. It is  a hybrid version of the primal and dual one, because it optimizes over some $R_{ij\alpha}$ matrices, which have very similar constraints as the $R_{ij}$ in the primal SDP \eqref{sdp_multi_copy_primal_t(d)} and also states $\rho_\alpha$ which are close to the witness states from the dual side. In principle, the states are not restricted to be the witness states and can instead be arbitrary. However, the constraints in the SDP might not be strong enough for differently chosen states, and we observe that, usually, preparing eigenvectors of the witness operators leads to the optimal expectation value of the witness.

Finally, we note that also the criteria from Refs.~\cite{Oszmaniec2017, Cobucci2025} can be used to obtain non-simulability witnesses via their duals. However, the method to account for preparation errors introduced here is not applicable to them and we are not aware of a method to make those witnesses robust.

\section{Experimental certification}\label{sec:experiment}
To demonstrate the practicality of our witnesses, we measured them for two POVMs. The first one is the qubit SIC-POVM $\mathcal{M}_2$ from Eq.~\eqref{eq:psi_sic2}, which is known to be the most non-projective POVM in dimension two~\cite{Hirsch2017betterlocalhidden} and has relevance for applications such as quantum state tomography~\cite{stricker_experimental_2022}. The second POVM measured is the real space IC-POVM in $d=3$, $\mathcal{M}_{3r}$ from Eq.~\eqref{eq:phi_sicr3}.  

Using the second level of the dual SDP \eqref{sdp_dual_t(d)}, we obtain that the optimal witness for both POVMs is given by measuring directly the effect vectors \eqref{eq:psi_sic2} and \eqref{eq:phi_sicr3}, respectively. In particular, for the 2d SIC-POVM, we get
\begin{align}
    W_i = \lambda_- \ketbra{\Psi_i}{\Psi_i} + \lambda_+ (\mathbb{I} - \ketbra{\Psi_i}{\Psi_i})
\end{align} with $\lambda_\pm = 1/\sqrt{24} \pm 1/4$. For $\mathcal{M}_{3r}$, we obtain
\begin{align}
    W_0 = \kappa_0 \ketbra{\Phi_0}{\Phi_0} + \kappa_1 \ketbra{\Phi_1}{\Phi_1} + \kappa_2 \ketbra{2}{2}
\end{align}
with $\kappa_0 \approx -0.02453$, $\kappa_1 \approx 0.2287$ and $\kappa_2 \approx 0.2223$. The other $W_i$ are obtained from $W_0$ upon conjugation with elements from the symmetry group generated from $X = \ketbra{1}{0} + \ketbra{2}{1} + \ketbra{0}{2}$ and $Z_3 = \ketbra{0}{0} + \ketbra{1}{1} - \ketbra{2}{2}$.

\begin{figure}
    \centering
    \includegraphics[width=1.0\columnwidth]{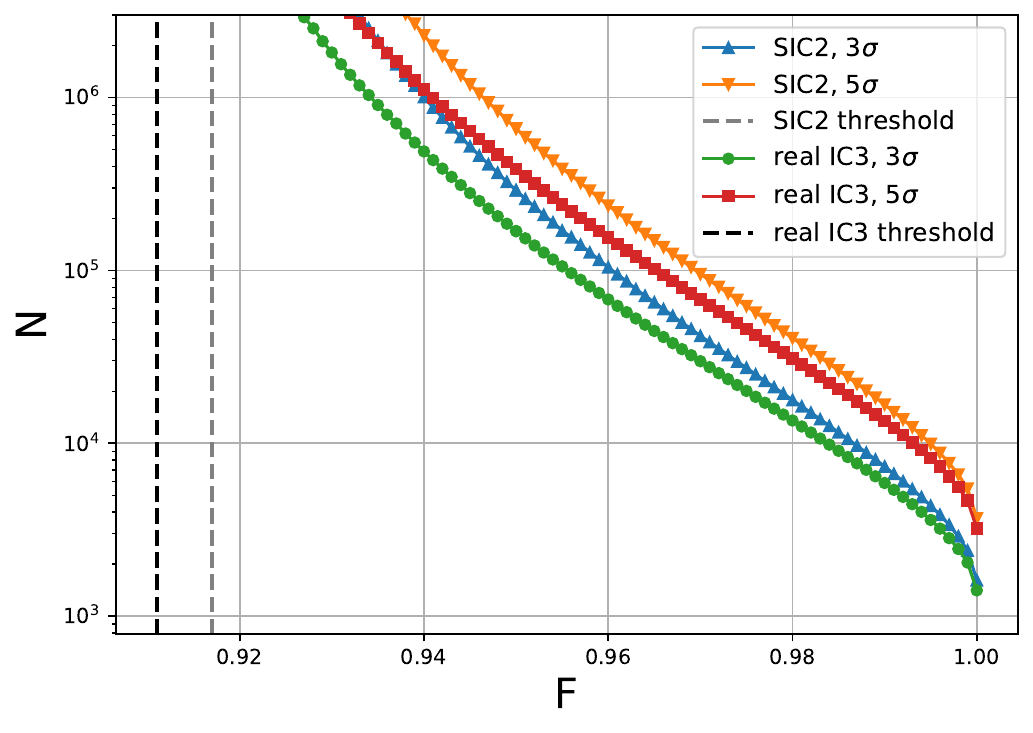}
    \caption{Number of shots $N$ versus the required fidelity $F$ of the probe states for the qubit SIC and real-space qutrit IC3 POVM, to obtain a violation with confidence intervals of $3\sigma$ and $5\sigma$. Thresholds correspond to fidelities at which no certification is possible anymore. \label{fig:N_vs_F}}
\end{figure}
To get an estimate of the state preparation fidelity \eqref{eq:fidelity_constraint} and the required number of shots $N$ per state, we set the probabilities to the ideally measured ones $p_{\alpha,i} = \Tr(\rho_\alpha M_i) $ and calculate the minimal statistical error $\delta$ using Eq.~\eqref{sdp_cert}. Then Hoeffding's bound in Eq.~\eqref{eq:hoeffding} can be used to calculate the minimum number of shots to achieve a certain confidence interval. In Fig.~\ref{fig:N_vs_F}, the minimum number of shots at given state fidelities for $3\sigma$ and $5\sigma$ certification intervals are plotted.

Both POVMs are measured on a universal qudit quantum processor using one trapped  $^{40}\text{Ca}^+$ ion in a macroscopic linear Paul trap. The trap is designed to hold linear strings of ions, confined through a combination of an rf-potential oscillating at \SI{23.5}{\mega \hertz} and a static potential generating confinement along the crystal axis. A sketch of the trap geometry can be seen in Fig.~\ref{fig:trap}. The logical state of the ion is encoded in its internal energy levels, and can be read out using state-dependent scattering of light that is collected on an EMCCD camera. State preparation is implemented using optical pumping as well as Doppler-, Polarization-gradient- and Resolved Sideband Cooling. A thorough description of the experimental setup can be found in Appendix~\ref{app:setup}, as well as in~Ref.~\cite{ringbauer_universal_2022}.

We used the same processor for measuring both the 2d SIC-POVM $\mathcal{M}_2$ in a qubit and the 3d IC3-POVM $\mathcal{M}_{3}$ in a qutrit. In $^{40}\text{Ca}^+$, we can access up to eight (meta)stable states suitable for quantum logic, of which seven can be experimentally distinguished. Qubit and qutrit are encoded into two or three of the available levels respectively. The measurement of the qubit SIC (qutrit IC3) POVM is then realized using a Naimark dilation that maps the information onto 4 (6) of the available levels, followed by a projective qudit readout. All logical operations were carried out using resonant laser pulses coupling to the $S_{\frac{1}{2}} \to D_{\frac{5}{2}}$ transitions. The allocation of qudit states to energy levels is shown in Fig.~\ref{fig:level} in Appendix~\ref{app:setup}.

\begin{figure}
    \centering
    \includegraphics[width=1.0\linewidth]{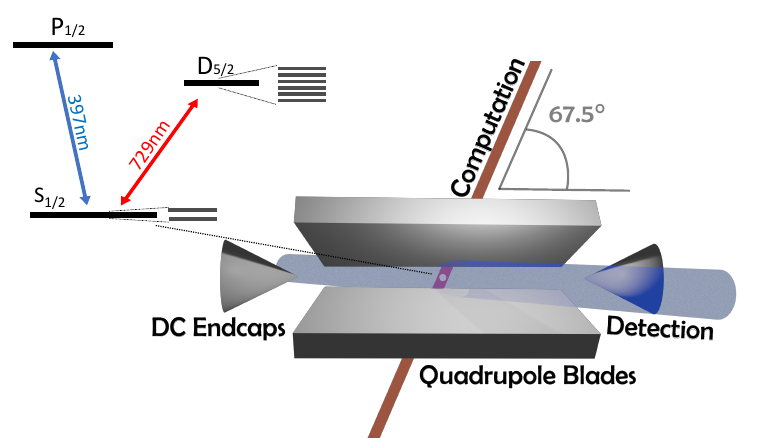}
    \caption{Schematic view of the trap and the course level structure of the $^{40}\text{Ca}^+$ ion. A combination of RF-potential on the quadrupole blades and a DC potential on the endcaps confines the ion in the trap centre. Detection and readout are performed using a \SI{397}{\nano \meter} laser, computation or state manipulation using a tightly focused \SI{729}{\nano \meter} laser, coming in at a 67.5 $^\circ$ angle to the trap axis.}
    \label{fig:trap}
\end{figure}

The certification consists of two steps. The first is the certification of the state preparation fidelity on a set of fiducial states, a lower bound which is required as an input for the experimentally constrained SDP, in particular the constraint in Eq.~\eqref{eq:fidelity_constraint}. Then, the POVM is measured on this set of states.

The state certification protocol consists of preparing the desired state, and then measuring in the eigenbasis of the target state. This procedure, which can be understood as a variant of direct fidelity estimation~\cite{flammia_direct_fidelity_theory}, directly gives access to the overlap or fidelity between the prepared and desired state. Notably, the measurement bases are chosen in a way that coherent errors in the state preparation are not cancelled by, but rather add to errors in the measurement rotations. The fidelity estimate thus includes the state preparation and measurement errors and is guaranteed to be a lower bound. Every state was certified with 50.000 shots. This was a compromise between measurement time and data quantity, since more shots reduce the threshold fidelity needed to certify non-simulabilty, see Fig.~\ref{fig:N_vs_F}. The results are shown in Fig.~\ref{fig:fidelities}. The corresponding numerical values can be found in Tab.~\ref{tab:fidelities} in Appendix~\ref{app:POVMmeasurement}. 

In the second step of the protocol the POVM was then measured using the same state preparation sequence followed by a unitary implementing its Naimark dilation, such that it can be projectively measured in the extended space of a $4$-dimensional (in case of the SIC-POVM) or a $6$-dimensional (in case of the real-space IC3 POVM) qudit. For each input state, 40.000 shots were used. For more details, see Appendix~\ref{app:experimental_imp}. The $p_i^k$ then correspond to the probabilities of the different measurement outcomes of the qudit after having applied the Naimark dilation and can be found in Tab.~\ref{tab:povm_statistics}.

The measured probabilities $p_i^k$ are then the basis of an SDP that evaluates the likelihood of the measured data to have come from a projectively simulable POVM. This way, we are able to certify with well over $5\sigma$ confidence (see Fig.~\ref{fig:N_vs_F}) that the performed qubit POVM $\mathcal{M}_2$ as well as the qutrit POVM $\mathcal{M}_{3r}$ could not have been simulated by projective measurements.

\begin{figure}
    \centering
    \includegraphics[width=\linewidth]{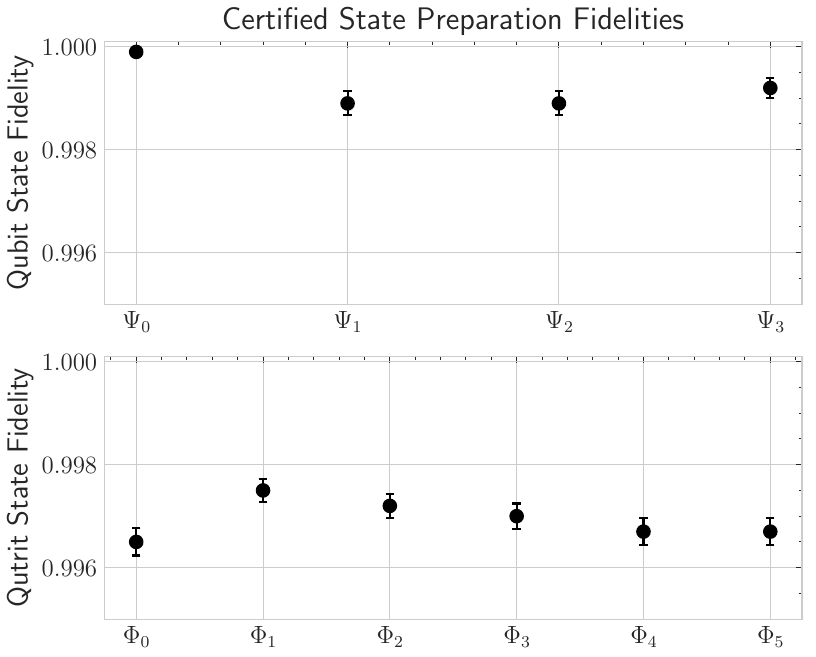}
    \caption{Certified fidelity for the different states to be measured by the POVMs. The top plot shows the states used for certifying the 2d-POVM, the bottom one the states used for the 3d POVM certification. Each state represents the average over 50000 shots.}
    \label{fig:fidelities}
\end{figure}

\section{POVM simulation with an ancilla}\label{sec:heralded}
In the preceding sections, we have discussed projective simulation for a situation where the experimenter has access to and control over only the $d$-dimensional system upon which the target POVM $\mathcal M$ acts. The situation changes considerably if a $d_A$-dimensional ancilla system is added to the picture, i.e., if the now $d_A d$-dimensional operators $P^k_i$ act on the joint system-ancilla Hilbert space, while the POVM is only $d$-dimensional. Projective simulation with an ancilla entails that the simulation gives the same measurement statistics as the target POVM. As before, the projective simulability of a POVM can be characterized by the critical visibility $t(\mathcal M)$. Projective simulation with visibility $t$ means that for all $ i \in \{0,\dots,n-1\}$,
\begin{equation}\label{eq:anc}
\begin{aligned}
\text{Tr}\left(\rho\, \Phi_t(M_i) \right) &= \text{Tr}(\rho\otimes \ketbra{0}{0}_A \sum_{k=1}^N p_k P^k_i)\\
&= \text{Tr}\left(\rho\,\text{Tr}_A\left(\sum_{k=1}^N p_k P^k_i\,\mathbb I\otimes \ketbra{0}{0}_A  \right)\right),
\end{aligned}
\end{equation}where $\Phi_t(M_i) = t M_i + (1-t)\frac{\text{Tr}(M_i)}{d}\mathbb I$, and we have taken the system-ancilla state to be $\rho\otimes \ketbra{0}{0}_A$, without loss of generality. Eq. \eqref{eq:anc} implies that 
\begin{align}\label{eq:anc1}
    \Phi_t(M_i) = \text{Tr}_A\left(\sum_{k=1}^N p_k P^k_i\,\mathbb I\otimes \ketbra{0}{0}_A  \right).
\end{align} We will use this condition to build criteria for projective simulability. Obviously, if $d_A\geq d$, any POVM can be perfectly projectively simulated ($t=1$), which is a consequence of the Naimark dilation theorem. However, if the experimenter has control over a smaller ancilla ($d_A<d$), the POVM can already be simulated to a considerably improved degree compared to no ancilla. For example, with a qubit ancilla, a dimension-independent lower bound on projective simulability of $1/8$ is known \cite{Kotowski2025}; however, generic upper bounds in such situations are not. To investigate this question, we reformulate the problem of projective simulability. 

It has been noted in Refs.~\cite{Khandelwal_2025,Cobucci2025} that a necessary condition for projective simulability can be built by partitioning the space of the projective measurement by the ranks of the measurement operators. We extend this method to investigate the simulation of POVMs with an ancilla. Let us denote as $\vec r =(r_0,r_1,\dots,r_{n-1})$ the tuple such that $r_i$ are non-negative integers and $\sum_{i=0}^{n-1}r_i=d_A d$, where $r_i = \text{rank}(P_i^k)$. We call $\vec r$ a rank tuple and use it to partition the space of projective measurements. We therefore substitute the index $k$ in the above equations by the pair $(\vec r,\mu)$, where $\mu$ is the index of the projective measurement within the set of measurements with rank structure described by $\vec{r}$. Using Eq. \eqref{eq:anc1}, this leads to the condition $M^{\text{h}}_i = \text{Tr}_A\left(\sum_{\vec r} F_i^{\vec r}\,\mathbb I\otimes \ketbra{0}{0}_A  \right)$, where $F_i^{\vec r}=\sum_\mu p_{\vec r,\mu}P^{\vec r,\mu}_i$ and $F_i^{\vec r}\geq 0$. Measurement completeness and normalization lead to $\sum_i F_i^{\vec r} = p_{\vec r}\mathbb I$, where $p_{\vec r} = \sum_\mu p_{\vec r,\mu}$. By definition, the trace of the projectors $\text{Tr}(P^{\vec r,\mu}_i) =r_i$, which implies that $\text{Tr}(F^{\vec r}_i) = p_{\vec r}\,r_i$. We then phrase the question of projective simulation with an ancilla as the following SDP 
\begin{equation}\label{eq:ancSDP}
\begin{aligned}
  \max_{F_i^{\vec r}} & \quad t \\
\text{s.t.}& \,\,	 \Tr_1\Big(\sum_{\vec r} F_i^{\vec r}\,\mathbb I\otimes \ketbra{0}{0}_A  \Big) = \Phi_t(M_i), \,  \text{Tr}(F_i^{\vec r})=p_{\vec{r}}\ r_i,\\
&	\quad\sum_i F_{i}^{\vec{r}}=p_{\vec{r}}\,\mathbb I, \quad\text{and}\quad F_{i}^{\vec{r}}\geq 0.
\end{aligned}
\end{equation}The second constraint implies $p_{\vec{r}}=\frac{1}{d_Ad}\sum_{i=0}^n \tr(F_{i}^{\vec{r}})$. For a given target POVM $\mathcal M$ with effects $\{M_i\}_{i=0}^{n-1}$ and given ancilla dimension, the above SDP gives an upper bound on the maximum visibility $t(\mathcal M)$.

In Tab. \ref{table_herald}, we give the upper bounds on the maximum visibility obtained with the above SDP, for various POVMs considered in this work.

\begin{table}
    \begin{tabular}{ l |c| c | c  }
      $d$ & $d_A$ & POVM $\mathcal{M}$ &$t(\mathcal M)$ (upper bound) \\
      \hline
      3 &2 & flag SIC2  & 1 \\ 
      \hline
      3 &2 &$\text{SIC3a}$  &0.9553  \\ 
      \hline
      3 &2 &$\text{SIC3b}$  & 0.9519  \\ 
      \hline
       3 &2 &$\text{SIC3c}$  & 0.9571 \\ 
      \hline
      4 &2 & flag SIC2 &1  \\ 
      \hline
    4 &2 & flag SIC3a &0.9652   \\ 
          \hline
       4 &2 & flag SIC3b &0.9652  \\ 
      \hline
        4 &2 & flag SIC3c &0.9652  \\ 
      \hline
           4 &2 & SIC4a &?  \\ 
      \hline

    \end{tabular}
    \caption{ SDP results for upper bounds on the critical visibility for various POVMs along with qubit ancilla, obtained using Eq.~\eqref{eq:ancSDP}. The names for SIC-POVMs are according to Ref.~\cite{Scott_2010}.}\label{table_herald}
\end{table}

\section{Outlook and Conclusion}

In this paper, we introduced a hierarchy of semidefinite programs to calculate robust upper bounds on measures of non-simulability of POVMs and showed that in many cases, the calculated bounds on the measure of critical visibility are tight and outperform those of previous bounding methods.

We then exploited the hierarchy to construct witnesses that are robust to both measurement and state preparation errors. Consequently, we implemented two POVMs, a two-dimensional SIC-POVM and a three-dimensional six-effect POVM, on a trapped-ion quantum processor using their projective Naimark extensions to a higher-dimensional space. Using our witness method, we showed that these measurements could not have been simulated using projective measurements in the original Hilbert space, thereby certifying the experiment's use of higher dimensional spaces.

Finally, we exemplified how our results can be extended beyond simulation within the same Hilbert space. To that end, we considered the concept of measurements with ancillas, which simulate general POVMs using projective measurements on a joint system with a dimensionally restricted ancilla. We calculated upper bounds on the critical visibility of such schemes for various POVMs with a qubit ancilla.

While our SDP hierarchy increases in size quickly and numerical evaluations are limited to low levels, we observe that already the lowest nontrivial level yields robust witnesses that outperform previous criteria and can be calculated even for large dimensions.

Apart from providing essential tools for certifying non-projectivity of measurements in experiments, our work uncovered fundamental related questions. For instance, we conjecture that our hierarchy converges to the true value of critical visibility after a finite number of steps. Proving this conjecture could shed light not only on the geometry of the set of simulable POVMs itself, but also on its relation to the occurring subspace of mutually orthogonal marginals. 
Another immediate question concerns the implications of measurement dimensionality. Motivated by our upper bounds on the critical visibility and previous findings that even a two-dimensional ancilla leads to a dimension-independent lower bound on the success probability, it seems that further analyzing the geometry of this landscape might reveal intricate features of quantum theory in general.

\acknowledgments
We thank an anonymous referee for pointing us to results in the PhD thesis of Matthew Pusey that allowed us to prove completeness of the hierarchy. 
This research was funded by the European Union under the Horizon Europe Programme—Grant Agreement 101080086—NeQST and by the European Research Council (ERC, QUDITS, 101039522). Views and opinions expressed are however those of the author(s) only and do not necessarily reflect those of the European Union or the European Research Council Executive Agency. Neither the European Union nor the granting authority can be held responsible for them. We also acknowledge support by the Austrian Science Fund (FWF) through the EU-QUANTERA project TNiSQ (N-6001), by the Austrian Federal Ministry of Education, Science and Research via the Austrian Research Promotion Agency (FFG) through the projects FO999914030 (MUSIQ) and FO999921407 (HDcode) funded by the European Union-NextGenerationEU, and by the IQI GmbH,
A.T.~is supported by the Swedish Research Council under Contract No. 2023-03498 and the Knut and Alice Wallenberg Foundation through the Wallenberg Center for Quantum Technology (WACQT). S.K.~acknowledges support from the Swiss National Science Foundation Grant No. P500PT-222265.
R.B., H.K.~and D.B.~acknowledge support by Deutsche Forschungsgemeinschaft (DFG, German Research Foundation)
under Germany’s Excellence Strategy -- Cluster of Excellence Matter and Light for Quantum Computing (ML4Q) EXC 2004/1 -- 390534769. N.W.~acknowledges support by EIN Quantum NRW.

\appendix
\onecolumngrid
\newpage
\section{Primal and Dual SDP Hierarchy}\label{app:primal_dual_sdp}
The $m$'th level primal SDP $H_m(\mathcal{M})$ is defined as
\begin{align}
  H_m(\mathcal{M}) &= \max_{R_{i_1\dots i_m},t} \quad t \label{sdp_multi_copy_primal_t(d)_app}  \\
  \textrm{s.t.} \quad & R_{i_1\dots i_m} \geq 0, \label{sdp_multi_1_app} \\
  &\sum_{\substack{i_1,\ldots,i_{j-1},\\i_{j+1},\ldots,i_m=0}}^{n-1} \!\!\!\!\!R_{i_1\dots i_m} = \mathbb{I} \otimes \dots \otimes  \Phi_t(M_{i_j}) \otimes \dots \otimes \mathbb{I}, \quad \forall j=1\ldots m, i_j = 0\ldots n-1, \label{sdp_multi_sum_app}  \\
  & \Tr_a (V_{(ab)} R_{i_1\dots i_m}) = 0, \; \forall \; V_{(ab)} \in \mathcal{V}_m \quad \text{if} \; V_{(ab)}\ket{i_1,\dots,i_m} \neq \ket{i_1,\dots,i_m} .  \label{sdp_multi_2_app}
\end{align}
$\mathcal{V}_m$ is the subset of all transpositions. All swap constraints containing larger permutations are automatically fulfilled then. To show this, consider a general permutation $V_{\pi}$ that changes index $a_2$ (we call $V_{\pi}$ the permutation, although it is strictly speaking the representation of the group element $\pi \in S_m$). Then
\begin{align}
    \Tr_{a_2} (V_{\pi} R_{i_1\dots i_m}) = 0.
\end{align}
$V_{\pi}$ can be written as a product of disjoint cycles $V_{\pi} =V_{(\dots)}\dots V_{(a_1a_2\dots a_l)}$. Since every cycle before $V_{(a_1a_2\dots a_l)}$ does not act on system $a_2$, we have 
\begin{align}
    \Tr_{a_2} (V_{(a_1a_2\dots a_l)} R_{i_1\dots i_m}) = 0.
\end{align}
Furthermore we can write any cycle as a product of transpositions $V_{(a_1a_2\dots a_l)} = V_{(a_1 a_l)}V_{(a_1 a_{l-1})} \dots V_{(a_1 a_2)}$. Again, because only $V_{(a_1 a_2)}$ acts on system $k_2$, we finally get
\begin{align}
    \Tr_{a_2} (V_{(a_1 a_2)} R_{i_1\dots i_m}) = 0.
\end{align}

Using standard arguments, the dual program $\mathcal{D}_m(\mathcal{M})$ of the $m$'th level SDP can be calculated to read
\begin{align}\label{sdp_dual_multi_t(d)}
    \mathcal{D}_m(\mathcal{M}) &=\min_{A^{i_\alpha,\alpha},D_{(ab),(i_1,\dots,i_m)}} \quad \sum_{i=0}^{n-1} \left[ \sum_{\alpha=1}^m \Tr(A^{i,\alpha})  \right] \frac{\Tr(M_i)}{d}, \\
    \textrm{s.t.} \quad & \sum_{\alpha=1}^m A^{i_\alpha,\alpha} +\sum_{ \substack{V_{(ab)} \in \mathcal{V}_m,\\ V_{(ab)}\ket{i_1,\dots,i_m} \neq \ket{i_1,\dots,i_m}}}\left[ (\mathbb{I}_d^b \otimes D_{(ab),(i_1,\dots,i_m)}^{\neq b} )V_{(ab)} + \text{h.c.}  \right] \geq 0 \; \forall \; i_1,\dots,i_m \label{eq:second_sum} ,\\
    & \sum_{i=0}^{n-1} \sum_{\alpha=1}^m \left[ \Tr(A^{i,\alpha}) \frac{\Tr(M_i)}{d} - \Tr(A^{i,\alpha}_{ \alpha} M_{i})  \right] \geq 1. \label{eq:third_sum}
\end{align}
$A^{i_\alpha,\alpha}$ are hermitian and $D_{(ab)}$ general complex matrices. For a fixed set of indices $(i_1,\dots,i_m)$, the second sum in \eqref{eq:second_sum} runs over all transpositions that change the $b$'th index. The notation $\mathbb{I}^b$ and $D_{(ab)}^{\neq b}$ means that the operators act on the $b$'th subsystem or everything but the $b$'th subsystem, respectively ($\mathbb{I}$ is a $d \times d$ and $D_{(ab)}^{\neq b}$ a $d^{m-1} \times d^{m-1}$ matrix). The expression $A^{i,\alpha}_{ \alpha}$ denotes the partial trace of $A^{i,\alpha}$ over every system except the $\alpha$'th one.
The last constraint in Eq.~\eqref{eq:third_sum} can be replaced with an equality. To see that, note that taking the trace of Eq.~\eqref{eq:second_sum} for $i_1 = i_2 = \ldots = i_m$ yields $\Tr(A^{i,\alpha}) \geq 0$, implying that the optimization function is always non-negative. Thus, replacing all $A^{i,\alpha}$ by $c A^{i,\alpha}$ (and likewise for the $D_{(ab)}$ matrices) for some constant $c\in(0,1]$ can only yield a better solution, as long as they still obey constraint \eqref{eq:third_sum}. Thus, choosing $c$ as small as possible to reach equality in that constraint is optimal. This observation, together with setting $m=2$ then yields the dual program $\mathcal{D}_2(\mathcal{M})$ that is displayed in Eq.~\ref{sdp_dual_t(d)} in the main text.

For Weyl-Heisenberg covariant SIC-POVMs, we further find the symmetry
\begin{align}
    A^{i,\alpha} = h_i^{\otimes m} A^{0,\alpha} (h_i^\dagger)^{\otimes m}
\end{align}
can be imposed to reduce the number of optimization variables.

\section{Proof of Theorem~\ref{thm:completeness}}\label{app:proof_completeness}

Here, we prove completeness of the hierarchy in Eq.~\eqref{sdp_multi_copy_primal_t(d)} together with the constraints in Eqs.~\eqref{eq:constraint_sym} and \eqref{eq:constraint_sum}. To that end, we make use of Theorem 5.4 from Ref.~\cite{pusey2013quantum}, which establishes a de~Finetti theorem for POVMs. In particular, it states that existence of matrices $R_{i_1\ldots i_m}$ for all $m$ fulfilling constraints \eqref{eq:constraint_sym} and \eqref{eq:constraint_sum}, where the $\tilde{R}_{i_1\ldots i_{m-1}}$ are feasible solutions of the hierarchy level $m-1$ (which, in our case, is established with the same arguments as in the proof of Theorem~\ref{thm:theorem_1}), implies the existence of a measure $\mu(\{E_0,\ldots,E_{n-1}\})$  over the set of $n$-outcome POVMs such that
\begin{align}
    R_{i_1\ldots i_m} = \int \text{d}\mu(\{E_0,\ldots,E_{n-1}\})  E_{i_1} \otimes \ldots \otimes E_{i_m}.
\end{align}
It remains to show that this indeed yields a projective decomposition. To that end, we consider the second level of the hierarchy and evaluate the trace over the  swap constraint in Eq.~\eqref{eq:swap_constraint} for fixed $i < j$:
\begin{align}
    0 = \Tr (VR_{ij} ) = \int \text{d}\mu(\{E_0,\ldots,E_{n-1}\})  \Tr(E_i E_j)
\end{align}
Thus, being a vanishing convex combination of non-negative numbers, almost all POVMs in the decomposition must have orthogonal effects. As $\Tr(E_iE_j) = 0$ implies $E_iE_j = 0$  for positive semidefinite matrices, it follows that almost all POVMs must be projective measurements, yielding a projective simulation of the POVM and the claim follows.

\section{Proof of Lemma \ref{lemma}}\label{app:proof_lemma}

Let $M = \{M_i\}_{i=0}^{n-1}$ be a POVM in dimension $d$ with rank-1 effects, $\Tr(M_i) = \frac{d}{n}$ and $M_iM_j \neq 0$ (for example a SIC-POVM). Then define the $d\cdot m$-dimensional POVM $\tilde{M}$ with $\tilde{M}_i = M_i \otimes \mathbb{I}_m $, where $m$ is chosen, such that $m\frac{d}{n} \in \mathbb{N}$ (e.g. $m=n$ or $m=\frac{n}{\gcd{(d,n)}}$).\\
We first prove i), that $H_2$ detects non-simulability of $\tilde{M}$.

From the constraints \eqref{eq:sum_i_R_ij} and \eqref{eq:sum_j_R_ij} one derives
\begin{align}\label{eq:Rij_app}
    \sum_j R_{ij} = M_i^{A_1} \otimes\mathbb{I}^{A_2}_m \otimes \mathbb{I}^{B_1B_2}_{dm} \quad, \sum_i R_{ij} = \mathbb{I}^{A_1A_2}_{dm} \otimes M_j^{B_1} \otimes \mathbb{I}_m^{B_2} \implies R_{ij} = M_i^{A_1}\otimes M_j^{B_1} \otimes N_{ij}^{A_2B_2},
\end{align}
with $N_{ij}^{A_2B_2} \geq 0$ since the $M_i$ are rank-1. From \eqref{eq:swap_constraint} and $\Tr_{A_1 A_2} (V R_{ii}) = \tilde{M}_i$ (which easily follows from equations \eqref{eq:sum_i_R_ij} and \eqref{eq:swap_constraint}) we get
\begin{align}
    \Tr_{A_1A_2} (R_{ij} V_{A_1B_1} \otimes V_{A_2B_2} ) = \underbrace{M_i^{A_1}M_j^{B_1}}_{\neq 0} \otimes \Tr_{A_2} (V_{A_2B_2}N_{ij}^{A_2B_2}) \overset{!}{=} \delta_{ij} \; M_i^{B_1} \otimes \mathbb{I}^{B_2}_m
\end{align}
From this we can deduce 
\begin{align}
\Tr_{A_2} (V_{A_2B_2}N_{ij}^{A_2B_2}) = 
\begin{cases}
    0, \quad \quad &i\neq j\\
    \frac{n}{d} \mathbb{I}_m, \quad &i = j.
\end{cases}
\end{align}
Then, by multiplying the last equation in \eqref{eq:Rij_app} with $V_{A_2B_2}$ and tracing out $A_2$ we get
\begin{align}
    \Tr_{A_2} (V_{A_2B_2}R_{ij}) = 
\begin{cases}
    0, \quad \quad &i\neq j\\
    \frac{n}{d} M_i^{A_1} \otimes M_j^{B_1} \otimes \mathbb{I}^{B_2}_m, \quad &i = j,
\end{cases}
\end{align}
and therefore
\begin{align}\label{eq:MMI}
    \sum_j \Tr_{A_2} (V_{A_2B_2}R_{ij}) =  \frac{n}{d} M_i^{A_1} \otimes M_j^{B_1} \otimes \mathbb{I}^{B_2}_m.
\end{align}
But from the first equation in \eqref{eq:Rij_app} we also know that
\begin{align}
    \sum_j \Tr_{A_2} (V_{A_2B_2}R_{ij}) = M_i^{A_1} \otimes\mathbb{I}_m^{B_1} \otimes \mathbb{I}^{B_2}_m,
\end{align}
which is in contradiction to \eqref{eq:MMI}, and therefore the constraints in $H_2$ can not be fulfilled simultaneously. 

We now turn to showing ii). The main idea of the necessary conditions on projective simulability in \cite{Oszmaniec2017} and \cite{Cobucci2025} are, to check whether a convex combination into POVMs $N$ exist, where each effect must have integer trace and not more than $d$ effects are allowed. But the POVM $\tilde{M}$ by itself already fulfills these conditions, therefore it finds a feasible solution and detects no non-simulability.

\section{Decomposition from primal solution}\label{app:decomp_primal}
In many cases, we can read off the decomposition in projective measurements from the solution of the primal SDP. The SDP constraints are necessary conditions on
\begin{align}\label{eq:R_i1im_app}
    R_{i_1\dots i_m} = \sum_{k=1}^{N}p_k ( P_{i_1}^k \otimes \dots \otimes P_{i_m}^k).
\end{align}
As an example, consider the decomposition of the noisy qubit SIC-POVM $\mathcal{M^\prime}= \Phi_{\sqrt{2/3}}(\text{SIC}_2)$, which was already given in Ref.~\cite{Oszmaniec2017}. The POVM decomposes into von~Neumann measurements. Therefore, each measurement has two non-vanishing effects spread across four elements, with $\binom{4}{2}=6$ measurements in total, i.e.
\begin{align}
    \begin{pmatrix}
        M_1^\prime\\
        M_1^\prime\\
        M_2^\prime\\
        M_3^\prime
    \end{pmatrix}=
    p_{(0,1)}
    \begin{pmatrix}
        P_0^{(0,1)}\\
        P_1^{(0,1)}\\
        0\\
        0
    \end{pmatrix}+
    \dots +
     p_{(2,3)}
    \begin{pmatrix}
        0\\
        0\\
        P_2^{(2,3)}\\
        P_3^{(2,3)}
    \end{pmatrix}.
\end{align}
But in this specific form, $R_{ij}$ for $i \neq j$ actually becomes a product state, namely $R_{ij} = p_{(i,j)} P_i^{(i,j)} \otimes P_j^{(i,j)}$, because there is only one measurement which has two non vanishing elements at positions $i$ and $j$, and therefore only one term in the sum \eqref{eq:R_i1im_app} survives. This means we can easily read off the projectors and $p_k$ by taking partial traces.

We can do the same for the $3-$dimensional POVMs, the only difference being that we have to go to the third level of the SDP hierarchy. We propose a similar decomposition into $\binom{9}{3}=84$ rank-one projective measurements
\begin{align}
    \begin{pmatrix}
        M_0^\prime\\
        \vdots\\
        M_8^\prime
    \end{pmatrix}=
    p_{(0,1,2)}
    \begin{pmatrix}
        P_0^{(0,1,2)}\\
        P_1^{(0,1,2)}\\
        P_2^{(0,1,2)}\\
        \vdots\\
        0
    \end{pmatrix}+
    \dots +
     p_{(6,7,8)}
    \begin{pmatrix}
        0\\
        \vdots\\
        P_6^{(6,7,8)}\\
        P_7^{(6,7,8)}\\
        P_8^{(6,7,8)}
    \end{pmatrix}.
\end{align}
The projectors can then be read off from $R_{ijk}$ with $ i\neq j, i\neq k, j \neq k$. In the cases we studied, only a subset of all $84$ combinations are non-vanishing. For example, the optimal noisy versions of SIC3a and SIC3b decompose into $63$ projective measurements. An analytical expression of the decomposition in $72$ projective measurements of the noisy version of SIC3c with $t = [1+4\cos(\pi/9)]/6$ can be found in Ref.~\cite{Cobucci2025}.

\section{Proof of Theorem~\ref{thm:witnesspositivity}}\label{app:proof_theorem}
To prove Theorem~\ref{thm:witnesspositivity}, it suffices to consider a simplified SDP of $\mathcal{H}_2(\mathcal{M})$ in \eqref{sdp_multi_copy_primal_t(d)_app}, which fixes the noise parameter to $t=1$
\begin{align}
  H^\ast_2(\mathcal{M}) &= \max_{R_{ij}} \quad 0 \label{eq:simplified_H2}  \\
  \textrm{s.t.} \quad & R_{ij} \geq 0, \label{eq:simpliefied_H2_1} \\
  &\sum_{i}^{n-1} R_{ij} = \mathbb{I} \otimes M_{j}, \quad \sum_{j}^{n-1} R_{ij} = M_i \otimes \mathbb{I}, \quad \forall \; i,j = 0\ldots n-1, \\
  &  \Tr_1 (V R_{ij}) = 0, \quad \text{if} \quad i\neq j. \label{eq:simpliefied_H2_2}
\end{align}
The dual of \eqref{eq:simplified_H2} reads
\begin{align}
\mathcal{D}^\ast_2(\mathcal{M})=& \min_{A^i,D^{ij}} \quad  \Tr(\Tr_2 (A^i) M_i) \\ 
\textrm{s.t.} \quad & A^i + V A^j V +  (\mathbb{I}\otimes D^{ij} V + \text{h.c.}) \geq 0, \quad \forall \; i,j = 0\ldots n-1  \;, \label{sdp_dual_pos_app} \\
&\text{with} \; D^{ij}=(D^{ji})^\dagger, D^{ii} = 0 , \quad \forall \; i,j = 0\ldots n-1. \nonumber   
\end{align}
We can now use the weak duality of semidefinite programming, which  states that for any feasible solution $\Tr(\Tr_2 (A^i) M_i) \geq 0$ holds. In this case, feasible means that the operators $A^i$ fulfill Eq.~\eqref{sdp_dual_pos_app} and Eqs.~\eqref{eq:simpliefied_H2_1}-\eqref{eq:simpliefied_H2_2} are fulfilled as well. Since Eqs.~\eqref{eq:simpliefied_H2_1}-\eqref{eq:simpliefied_H2_2} are just necessary conditions on simulability, this immediately proves Theorem~\ref{thm:witnesspositivity}.

\section{Certifying non-projectivity from experimental data}\label{app:certify_exp}
By analyzing measurement statistics and employing necessary conditions on projective simulability, we can certify that a performed measurement can not be produced by only applying projective measurements.

For this, consider a measurement $\{M_i\}_{i=0}^{n-1}$ on an ensemble of states $\{\rho_\alpha\}_{\alpha=1}^{r-1}$, yielding the probabilities
\begin{align}
    &p_{\alpha,i} = \Tr(\rho_\alpha M_i), \; \alpha \in \{0,\dots,r-1\}, i \in \{0,\dots,n-1\}, \\
    & 0 \leq p_{\alpha,i} \leq 1, \; \forall \; \alpha \in \{0,\dots,r-1\}, i \in \{0,\dots,n-1\}, \\
    &\sum_i p_{\alpha,i} = 1, \; \forall \; \alpha \in \{0,\dots,r-1\}.
\end{align}
The measured probabilities $p^{\text{exp}}_{\alpha,i}$ have some statistical error $\delta_{\alpha,i}$. For simplicity (and due to assuming a constant number of shots for each state measured and applying Hoeffding's bound, see below) we set all statistical errors equal $\delta_{\alpha,i} = \delta$, although the same analysis can be done for different errors. Furthermore we know, the states $\rho_\alpha$ are close to some known target states $\ketbra{\lambda_\alpha}{\lambda_\alpha}$, i.e.
\begin{align}
    F(\rho_\alpha,\ketbra{\lambda_\alpha}{\lambda_\alpha}) = \Tr(\rho_\alpha \ketbra{\lambda_\alpha}{\lambda_\alpha}) \geq F_\alpha.
\end{align}
We want to check, if the measured probabilities are in agreement with probability distributions arising from a projectively simulable POVM. For this, note that we can always define the following matrices for simulable POVMs (similar to SDP \eqref{sdp_multi_copy_primal_t(d)})
\begin{align}
    R_{ij\alpha} = \sum_\lambda p_\lambda  P_{i}^\lambda \otimes P_{j}^\lambda \otimes \rho_\alpha.
\end{align}

One can check the following feasibility SDP
\begin{align}
\text{find} \; &R_{ij\alpha},M_i,\rho_\alpha, q_{\alpha,i}, \label{sdp_cert}\\
\textrm{s.t.} \quad & R_{ij\alpha} \geq 0, R_{ij\alpha}^\Gamma \geq 0,\label{sdp_cert_1}  \\
 & \Tr_1 (V_{(12)} R_{ij\alpha}) = 0 \; \forall \; \alpha, i \neq j, \\
  & \rho_\alpha \geq 0, \Tr(\rho_\alpha) = 1, \Tr(\rho_\alpha \ketbra{\lambda_\alpha}{\lambda_\alpha}) \geq F_\alpha,  \; \forall \; \alpha \in \{0,\dots,r-1\},\\
  & \sum M_i = \mathbb{I}, \; M_i \geq 0, \; \forall \; i \in \{0,\dots,n-1\},\\
  & p_{\alpha,i} - \delta \leq q_{\alpha,i} \leq p_{\alpha,i} + \delta, \; \forall \; \alpha \in \{0,\dots,r-1\}, k \in \{0,\dots,n-1\},  \\
  & \sum_i q_{\alpha,i} = 1, \; \forall \; \alpha \in \{0,\dots,r-1\}, \\
  & \Tr_3 \sum_{i} R_{ij\alpha} = \mathbb{I} \otimes M_j \; \forall \; j, \label{sdp_cert_2} \\
  & \Tr_3 \sum_{j} R_{ij\alpha} = M_i \otimes \mathbb{I} \; \forall \; i , \\
  & \Tr_2 \sum_{i} R_{ij\alpha} \geq p^{\text{max}}_{j} \, \mathbb{I} \otimes \rho_\alpha \, \forall \alpha, j,  \\
  & \Tr_1 \sum_{j} R_{ij\alpha} \geq p^{\text{max}}_{i} \, \mathbb{I} \otimes \rho_\alpha \, \forall \alpha, i,  \\
  & \sum_{i,j} R_{ij\alpha} = \mathbb{I} \otimes \mathbb{I} \otimes \rho_\alpha, \; \forall \alpha \in \{0,\dots,r-1\} ,\\
  &  \Tr_{23} \left( V_{(23)} \sum_{i} R_{ij\alpha} \right) = q_{\alpha,j} \mathbb{I}, \; \forall \alpha, j, \\
  &  \Tr_{13} \left( V_{(13)} \sum_{j} R_{ij\alpha} \right) = q_{\alpha,i} \mathbb{I}, \; \forall \alpha, i \label{sdp_cert_3}
\end{align}
The optimization in \eqref{sdp_cert} is over POVM elements $M_i$, states $\rho_\alpha$ and probabilities $q_{\alpha,i}$, which are related to additional optimization variables $R_{ij\alpha}$ via the constraints (\ref{sdp_cert_2}-\ref{sdp_cert_3}). As an input it takes the experimentally measured probabilities $p_{\alpha,i}$, their statistical errors $\delta$ and lower bounds on fidelities $F_\alpha$. $p^{\text{max}}_{i}$ is the maximal probability measured for the outcome $i$. If the SDP fails to find a solution and is therefore infeasible, this implies that the implemented POVM was not projectively simulable, as the necessary conditions (\ref{sdp_cert_1}--\ref{sdp_cert_3}) cannot be satisfied simultaneously.

The statistical error can be fixed beforehand, but mathematically it is equivalent to minimize this error $\delta$ in \eqref{sdp_cert} and from this deduce a confidence interval via Hoeffding's inequality \cite{Hoeffding01031963}

\begin{align}
    \mathbb{P}\left( |p_{\alpha,i} - \mathbb{E}[\overline{p_{\alpha,i}}]| \geq \delta \right) \leq 2 \exp\left( -2 N \delta^2 \right).
\end{align}

For the 2-dimensional 4-effect POVM we get $\delta \approx 0.0336$.
For the 3-dimensional 6-effect POVM we get $\delta \approx 0.0232$. Both errors correspond to confidence intervals of more than $5\sigma$.

\section{Experimental Implementation}\label{app:experimental_imp}
\subsection{Setup}
\label{app:setup}
The experiment is performed on the same trapped ion qudit quantum processor described in \cite{schindler_quantum_processor, ringbauer_universal_2022}. One $^{40}\text{Ca}^+$ ion is trapped in a linear Paul trap, and information is encoded in the population of the metastable state ($\text{D}_{5/2}$) and the ground state ($\text{S}_{1/2}$). A schematic of both the physical level scheme and the trap geometry can be seen in Fig.~\ref{fig:trap}.  
The ion is  ground-state-cooled using first Doppler cooling\cite{HANSCH197568,wineland_doppler_cooling,laser_cooling_review} and polarization gradient cooling (PGC)\cite{pgc_dalibard,Joshi_2020} coupling with \SI{397}{\nano \meter} light to the dipole transition, followed by resolved sideband cooling with a \SI{729}{\nano \meter} laser on the quadrupole transition.\cite{diedrich_sideband_cooling,Leibfried_2001} The logical states are then manipulated using an addressed \SI{729}{\nano \meter} laser with a $<$ \SI{1}{\hertz} linewidth.\cite{freund_xcorr} Additionally, a \SI{866}{\nano \meter} and a \SI{854}{\nano \meter} laser are needed for repumping the ion from the D-manifold.
\subsection{Qudit Detection}
State readout also utilizes the \SI{397}{\nano \meter} laser, scattering light resonantly and collecting it using an objective and focusing it onto an Andor EMCCD camera. Scattering light then corresponds to the ion being in the $\text{S}_{1/2}$ manifold, no light to the $\text{D}_{5/2}$ manifold. In order to discriminate multiple different levels of the $\text{D}_{5/2}$ manifold, multiple sequential readout cycles have to be used: If the ion starts with only one state encoded in the $\text{S}_{1/2}$ manifold (e.g. $\ket{0}$, a first readout only gives information on whether the ion was in $\ket{0}$ or another state. If the ion is found dark, a narrow, resonant \SI{729}{\nano \meter} laser pulse maps the $\ket{1}$ state to the S-manifold, followed by another detection. If the ion is again found dark, this repeats with $\ket{2}$, until the qudit state of the ion is fully determined. 
\begin{figure}[h!]
    \hspace{-5em}
    \begin{minipage}{0.4\linewidth}
    \textbf{(a)}\par\vspace{0.8em}
        \includegraphics[height=4cm]{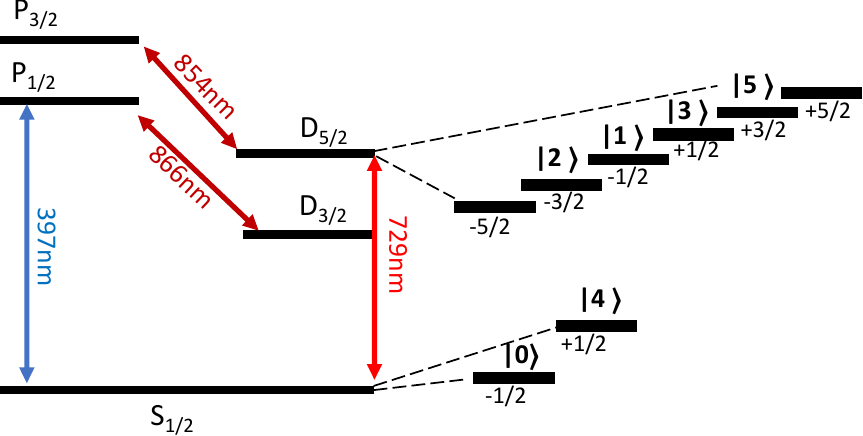}
    \end{minipage}
    \hspace{1cm}
    \begin{minipage}{0.4\linewidth}
    \textbf{(b)}\par\vspace{-0.2em}
    \hfill
    \includegraphics[height=4cm]{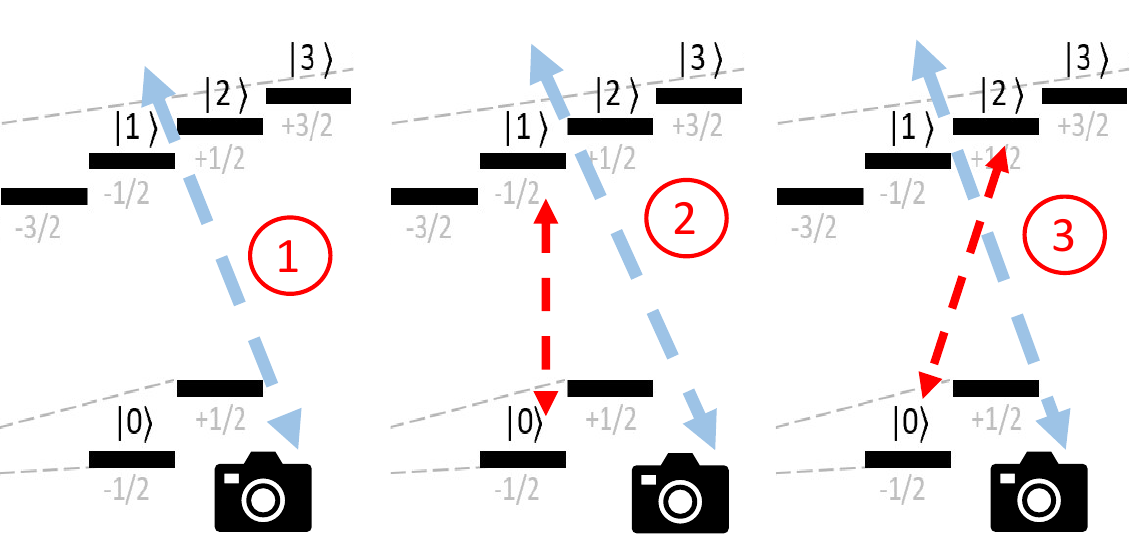}

    \end{minipage}
    \caption{ \textbf{(a)} Schematic level diagram of a $^{40}\text{Ca}^+$ ion, including the Zeeman-splitting in the $\text{S}_{1/2}$ and $\text{D}_{1/2}$ manifold. States used for logic are labeled as $\ket{0}$ etc. Arrows indicate the different lasers that are used for qudit manipulation as well as their wavelengths. 
    \textbf{(b)} Illustration of the qudit readout process, in the case of a ququart. In this example, we have to implement the readout routine three times to fully characterize the ion state.}
    \label{fig:level}
\end{figure}

\subsection{POVM Measurement}\label{app:POVMmeasurement}
Experimentally, two different POVMs are certified for non-simulability: The 4-effect SIC-POVM in a 2d system $\mathcal{M}_2$, and a 6-effect informationally complete POVM in a real 3d system $\mathcal{M}_{3r}$.

We first measured the qubit SIC-POVM $\mathcal{M}_2$, as e.g., used in \cite{stricker_experimental_2022}, given by projectors onto the tetrahedral vectors in Eq.~\eqref{eq:psi_sic2}. The POVM measurement is implemented using unitary $\hat{U}_4$ to map the qubit onto the larger available ququart space (vectors {$\ket{0}-\ket{3}$}). This acts as a Naimark dilation, so that the outcomes from a projective measurement in the computational basis of the ququart directly correspond to the four different POVM outcomes.

The 6-effect POVM $\mathcal{M}_{3r}$is measured in the same way. It consists of projectors onto the vectors in Eq.~\eqref{eq:phi_sicr3}, and is implemented using unitary $\hat{U}_6$ that maps the qutrit to the qusext still available in the $^{40}\text{Ca}^+$ ion, which can then be measured in the computational basis. The corresponding unitaries read
\begingroup
\renewcommand*{\arraystretch}{1.5}

\begin{align*} \hat{U}_4 = 
\begin{pmatrix}
  \frac{1}{\sqrt{2}} & 0 & 0 & \frac{1}{\sqrt{2}} \\
  \frac{1}{\sqrt{6}} & \frac{1}{\sqrt{3}} & \frac{1}{\sqrt{3}} & \frac{-1}{\sqrt{6}} \\
  \frac{1}{\sqrt{6}} & \frac{(-1)^\frac{2}{3}}{\sqrt{3}} & \frac{(-1)^\frac{4}{3}}{\sqrt{3}} & \frac{-1}{\sqrt{6}} \\
  \frac{1}{\sqrt{6}} & \frac{(-1)^\frac{4}{3}}{\sqrt{3}} & \frac{(-1)^\frac{2}{3}}{\sqrt{3}} & \frac{-1}{\sqrt{6}} \\
\end{pmatrix},
\quad
\hat{U}_6 = 
\begin{pmatrix}
  \frac{1}{2} & \frac{1}{2} & 0 & -\frac{1}{2}& 0 & -\frac{1}{2} \\
  \frac{1}{2} & 0 & \frac{1}{2} & 0 & \frac{1}{2} & \frac{1}{2} \\
  0 &\frac{1}{2} & \frac{1}{2} & \frac{1}{2} & -\frac{1}{2}  &  0 \\
  \frac{1}{2} & -\frac{1}{2}  & 0 & \frac{1}{2} &  0&  -\frac{1}{2} \\
  \frac{1}{2} & 0 & -\frac{1}{2}  & 0 &  -\frac{1}{2}  &  \frac{1}{2} \\
  0 &\frac{1}{2} & -\frac{1}{2}  & \frac{1}{2} & \frac{1}{2} &  0 \\
\end{pmatrix}.
\end{align*}

\endgroup

The fidelities of the prepared states are estimated as explained in Sect.~\ref{sec:experiment} in the main text and can be found in Tab.~\ref{tab:fidelities}. The probabilities of the different outcomes are listed in Tab.~\ref{tab:povm_statistics}. While the state certification used 50.000 shots per state, the POVM was measured using 40.000 shots. This is because state certification has a greater influence on the POVM witness uncertainty intervals. 

\begin{table}[t]
    \centering
    \begin{minipage}{0.35\linewidth}
    \begin{tabular}{c|c}
      State   & Fidelity \\ \hline
      $\ket{\Psi_0}$  & 0.9999(1)\\
       $\ket{\Psi_1}$ & 0.9989(2)\\
       $\ket{\Psi_2}$  & 0.9989(2)\\
       $\ket{\Psi_3}$  & 0.9992(2)\\
         
    \end{tabular}
    \end{minipage}
    \begin{minipage}{0.35\linewidth}
    \begin{tabular}{c|c}
       State  & Fidelity \\ \hline
       $\ket{\Phi_0}$  & 0.9965(3) \\
       $\ket{\Phi_1}$ & 0.9975(2) \\
       $\ket{\Phi_2}$ & 0.9971(2) \\
       $\ket{\Phi_3}$ & 0.9970(2) \\
       $\ket{\Phi_4}$ & 0.9967(3) \\
       $\ket{\Phi_5}$ & 0.9967(3) \\
    \end{tabular}
    \end{minipage}
    \label{tab:fidelities}
    \caption{Lower bounds on the state preparation fidelities. The left table contains all states used to certify the 4-effect POVM, the right table the states used to certify the 6-effect POVM. The given fidelities are lower bounds, where readout error is included in the infidelity given.  The number in brackets is the uncertainty on the last shown digit, given as the shot-wise binomial standard deviation of the projective measurement.}
\end{table}

\begin{table}[t]

    \begin{minipage}{0.8\linewidth}
    \centering
    \begin{tabular}{c|c|c|c|c}
      State   & "0" & "1" & "2" & "3" \\ \hline
       $\ket{\Psi_0}$  & 0.498(2) &  0.172(1) & 0.163(1)& 0.167(1)\\
       $\ket{\Psi_1}$ & 0.165(1)& 0.498(2)& 0.172(1) &  0.165(1)\\
       $\ket{\Psi_2}$  & 0.169(1) &  0.162(1) &0.501(2) & 0.168(1)\\
       $\ket{\Psi_3}$  & 0.168(1) &0.168(1) &0.166(1) & 0.499(2)\\
         
    \end{tabular}    
    \newline
    \begin{tabular}{c|c|c|c|c|c|c}
       State &"0"  & "1" & "2" & "3" & "4" & "5" \\ \hline
       $\ket{\Phi_0}$ &0.486(2)& 0.135(2)& 0.125(2) & 0.0017(2) & 0.126(2)& 0.127(2) \\
       $\ket{\Phi_1}$ & 0.142(2)& 0.487(2) & 0.128(2)  & 0.117(2)& 0.0014(2)& 0.124(2) \\
       $\ket{\Phi_2}$ & 0.130(2)& 0.118(2)& 0.499(2)&0.127(2) &0.124(2)&0.0035(3) \\
       $\ket{\Phi_3}$ & 0.0020(2)& 0.139(2)& 0.116(2)& 0.477(2)&0.1320 (2)&0.134(2)\\
       $\ket{\Phi_4}$ & 0.123(2)& 0.0019(2)&0.133(2)& 0.122(2)  &0.490(2)& 0.130(2) \\
       $\ket{\Phi_5}$ &0.112(2)& 0.112(2)& 0.0020(2) & 0.124(2)& 0.121(2)& 0.514(2) \\
    \end{tabular}
    \end{minipage}
    \label{tab:povm_statistics}
    \caption{Qubit and qutrit POVM outcome probabilities.}
\end{table}

\bibliography{apssamp}

\end{document}